\documentclass[aps,pra,twocolumn,showpacs,superscriptaddress,nofootinbib]{revtex4-2}   
\usepackage{graphicx}  
\usepackage[caption=false,subrefformat=parens,labelformat=parens]{subfig}
\usepackage{float}
\usepackage[toc,page]{appendix}
\usepackage[table]{xcolor}
\usepackage{mathtools}
\usepackage{dcolumn}   
\usepackage{bm}        
\usepackage{color}

\usepackage{hyperref}
\usepackage{pifont}
\usepackage{mathrsfs}
\usepackage{amsmath, amsthm, amssymb}
\usepackage{amsfonts}
\usepackage{relsize}    
\usepackage{bbold}       
\usepackage{accents}    

\hypersetup{
	colorlinks=true, 
	linktoc=all,     
	linkcolor=blue,  
	citecolor=blue,
	filecolor=blue,
	urlcolor=blue
}
\usepackage{soul}

\usepackage{tikz}
\usetikzlibrary{arrows.meta, decorations.markings}
	
\newcommand{\beq}{\begin{equation}}
	\newcommand{\eeq}{\end{equation}}
\newcommand{\bqa}{\begin{eqnarray}}
	\newcommand{\eqa}{\end{eqnarray}}

\newcommand{\erf}[1]{Eq.~(\ref{#1})}

\newcommand{\erfs}[2]{Eqs.~(\ref{#1})--(\ref{#2})}
\newcommand{\erfa}[2]{Eqs.~(\ref{#1}) and (\ref{#2})}
\newcommand{\arf}[1]{{App.}~\ref{#1}} 
 
\newcommand{\crf}[1]{Ref.~\cite{#1}} 
\newcommand{\trf}[1]{Table ~\ref{#1}}

\newcommand{\frf}[1]{Fig.~\ref{#1}}

\newcommand{\ie}{{\it i.e.}}
\newcommand{\dg}{^\dagger}

\definecolor{BLACK}{gray}{0}
\definecolor{RED}{rgb}{1,0,0}
\definecolor{GREEN}{rgb}{0.2,.6,0.2}
\definecolor{amber}{rgb}{1.0,0.49,0.0}

\renewcommand{\[}{\left[}

\newcommand{\smallfrac}[2]{\mbox{$\frac{#1}{#2}$}}

\newtheorem*{definition}{Definition}
\newtheorem*{proposition}{Proposition}

\newcommand\Tstrut{\rule{0pt}{3.5ex}}

\begin{document}
	
	\widetext

	
	\title{Nonnormality and Dissipation in Markovian Quantum Dynamics: Implications for Quantum Simulation}

	\author{Shakib Daryanoosh} 
\email{shakib.daryanoosh@curtin.edu.au}
\affiliation{Curtin Centre for Optimisation and Decision Science, Curtin University, Whadjuk Country, Perth 6102, Australia}

%
%


%
\vskip 0.25cm
	\date{\today}

\begin{abstract}
	Understanding the structure and stability of open quantum dynamics is increasingly important for both fundamental studies of nonequilibrium quantum systems and the development of quantum simulation algorithms. In this work, we introduce a structural framework for Markovian open quantum systems that characterizes Lindbladian generators in terms of two scalar quantities: the dissipative strength and the nonnormality. 
	We show that normal generators admit an exact decoupling between dissipative and norm-preserving dynamics, leading to purely exponential behavior governed by the dissipative scale. In contrast, nonnormality is an intrinsically dissipative feature: it vanishes in the absence of dissipation but is not implied by it. Moreover, it is structurally constrained by the interplay between the Hermitian and anti-Hermitian components of the generator. For generic Markovian open quantum systems, we identify  scaling  regimes controlled by a dimensionless ratio between nonnormality and dissipative strength, governing the onset of transient amplification.
	These structural features have direct implications for quantum simulation. While Hamiltonian and normal dissipative dynamics exhibit stable evolution with standard scaling behavior, nonnormal generators can induce transient growth that amplifies numerical errors and increases simulation cost. Our results provide a unified generator-level perspective on irreversibility, stability, and quantum simulation of open quantum systems.
\end{abstract}

	\maketitle

\section{Introduction} \label{sec:intro}
Open quantum systems are foundational to understanding irreversibility, decoherence, and the emergence of classical behavior from quantum mechanics~\cite{BrePet02, RivHue12, GarZol04}. Beyond fundamental interest, open-system dynamics are also of pressing technological importance: simulating open quantum systems is a promising application for quantum computers~\cite{Weimer15, SchMaz21, KamMin22,SurMar23,PocWie25,SanMen25,LiuGuo25}. Moreover, concepts of dissipation and decoherence are intimately linked to strategies in quantum error correction and fault tolerance~\cite{LidBru13,Terhal15, LooKho18,LieGor20}. Understanding and controlling these processes is therefore essential for both fundamental insights and the practical realization of quantum technologies.

Most theoretical studies characterizing Markovian open quantum systems have focused on characterizing dynamics through spectral properties and algebraic structures of the generator, with particular emphasis on steady states and asymptotic convergence~\cite{Spohn78,WolCir08,ZanRas97,LidWha98,SasUma23,LiPol23,Yoshida24,LiZhe25}. While these approaches provide powerful tools for analyzing long-time behavior, relatively limited attention has been given to systematic classifications of dissipative dynamics based on intrinsic structural features of the generator~\cite{BauThi08,BauNar12,AlbJia16}. 

 Open-system dynamics has been studied from several complementary perspectives, including  quantum simulation, nonnormal operator theory, and non-Hermitian quantum physics, each emphasizing different structural aspects of the generator. Algorithmic studies of Lindbladian simulation focus on complexity bounds based on spectral properties or operator norms~\cite{ChiZhu21,KliEis11,BerSom15}, while the theory of nonnormal operators highlights the role of pseudospectra and transient growth in governing short-time dynamics and sensitivity~\cite{TreEmb05,ChaMeh98, NavLar26}. Related effects also arise in non-Hermitian quantum systems, where eigenmode nonorthogonality underlies enhanced response and dissipative phase transitions~\cite{AshUed20,BelDon25}.

Taken together, these approaches emphasize complementary aspects of open-system dynamics but remain largely disconnected. In particular, the interplay between dissipative strength and nonnormality in physically relevant generators has not been systematically captured within a unified and operational framework, which motivates the approach developed in this work.

In this paper, we introduce a structural framework for Markovian quantum dynamical generators that formalizes this relationship in a minimal and quantitative way using two scalar quantities: the dissipative strength, $\delta(\mathcal L)$, which quantifies the overall magnitude of irreversible effects, and the nonnormality, $\eta(\mathcal L)$, which captures directional flow in operator space and short-time transient phenomena~\cite{TreEmb05}. We show that these quantities are not independent, but obey structural constraints that can restrict the accessible dynamical regimes. By decomposing the generator into Hermitian and anti-Hermitian components with respect to the Hilbert–Schmidt inner product, our approach naturally distinguishes between dissipative and oscillatory contributions to the dynamics. 

The resulting $(\delta, \eta)$--classification admits a simple geometric representation that highlights three  benchmark  regimes: purely Hamiltonian dynamics, normal dissipative evolutions, and generic nonnormal open systems exhibiting transient amplification. This framework not only provides insight into qualitative differences between these regimes but also informs the scaling of simulation costs and sensitivity to perturbations.

We show that for normal Lindbladians, the dynamics admits an exact decoupling into dissipative and rotational (norm-preserving) components, leading to purely exponential behavior governed by the dissipative scale. In contrast, nonnormality emerges as a genuinely dissipative phenomenon: on one hand it cannot arise in the absence of dissipation, and on the other hand it is not implied by it. Instead it reflects a finer  structural property in operator space  of the generator associated with noncommutativity between its components. 

Building on this distinction, we identify  scaling  regimes in which nonnormal effects are either perturbative, comparable to, or dominant over dissipative decay, thereby organizing open-system dynamics into weakly nonnormal, crossover, and strongly nonnormal regimes. This classification has direct implications for quantum simulation. While normal and weakly nonnormal dynamics retain the favorable scaling of closed-system evolution up to constant factors, stronger nonnormality leads to transient amplification and enhanced sensitivity to errors, which can increase simulation cost by effectively coupling precision requirements to dynamical properties of the generator.

While a full classification of generic Lindbladians remains open, 
our approach highlights the key structural features that determine simulation complexity. Here and throughout, we use the term computational complexity
in an operational sense, referring to the scaling of resources required
to simulate the dynamics (such as time, precision, or numerical stability),
rather than in a strict oracle-based query complexity model.
Our approach complements existing studies of open quantum systems by providing a generator-level perspective that is independent of specific states or steady-state assumptions. We demonstrate the utility of the framework through representative examples.

The remainder of the paper is organized as follows. In Sec.~\ref{sec:formalism}, we introduce the formalism and define the key metrics along with structural properties of quantum generators. Here we present an operational interpretation of the measures and their link to quantum simulation cost. Sec.~\ref{sec:classes} illustrates the classification with  representative  examples from Hamiltonian, normal dissipative, and nonnormal regimes. Section~\ref{sec:conclusion} provides a summary of results and outlook for future work. Technical details and proofs are provided in the appendices.

\section{Structural Framework for Quantum Dynamical Generators} \label{sec:formalism}
We consider Markovian open quantum systems described by the Gorini--Kossakowski--Sudarshan--Lindblad master equation~\cite{Lindblad76,GKS76}
\begin{equation} \label{eqn:GKSL}
	\dot{\rho} = -i[\hat H,\rho] + \sum_k \left( L_k \rho L_k^\dagger - \frac{1}{2} \{L_k^\dagger L_k, \rho\} \right) \equiv \mathcal L(\rho),
\end{equation}
where $\mathcal L$ is the Lindbladian generator acting on operator space, and $\{\hat L_k\}$ are Lindblad (system) operators\footnote{In this work we restrict to finite-dimensional Hilbert spaces, so that all operators and superoperators are bounded and operator norms are well-defined.}. Throughout, we analyze structural properties directly at the level of $\mathcal L$, independent of particular states or steady-state assumptions, and equipped with the Hilbert–Schmidt operator structure that provides a natural geometry for the space of operators.
Any generator $\mathcal L$ admits the decomposition\footnote{This decomposition is  defined  with respect to the Hilbert–Schmidt inner product and is uniquely suited to analyzing norm growth and error amplification in quantum simulation.}
\begin{equation} \label{L:d:nd}
	\mathcal L = \mathcal L_{\rm d} + \mathcal L_{\rm nd},
\end{equation}
where
\begin{equation} \label{Ld:Lnd}
	\mathcal L_{\rm d} := \frac{1}{2}(\mathcal L + \mathcal L^\dagger), \qquad 
	\mathcal L_{\rm nd} := \frac{1}{2}(\mathcal L - \mathcal L^\dagger).
\end{equation}
Here $\mathcal L_{\rm d}$ is Hermitian and $\mathcal L_{\rm nd}$ is anti-Hermitian with respect to the Hilbert--Schmidt inner product\footnote{This choice ensures basis independence and aligns with standard formulations of quantum dynamical maps. Alternative inner products would lead to modified decompositions. The adjoint of a superoperator ${\cal S}\dg$ is defined according to $\langle \hat A,{\cal S}(\hat B)\rangle = \langle{\cal S}\dg(\hat A),\hat B\rangle$.} (for operators $\hat A,\hat B$ on a finite-dimensional Hilbert space the Hilbert-Schmidt inner product is $\langle\hat A,\hat B\rangle \coloneq {\rm Tr}[{\hat A\dg \hat B}]$). This choice is consistent with the geometry underlying the norm used to quantify propagation. The Hermitian component $\mathcal L_{\rm d}$ governs the rate of change of the Hilbert–Schmidt norm via
\begin{equation} \label{HS:norm:change}
	\frac{d}{dt} \big\Vert \rho(t)\big\Vert_{\rm HS}^2 = 2 \langle \rho, {\cal L}_{\rm d}(\rho) \rangle.
\end{equation}
While ${\cal L}_{\rm d}$ is not necessarily negative semidefinite, it encodes the magnitude and direction of dissipative effects, including both contraction and expansion along different operator directions. In contrast, the anti-Hermitian component ${\cal L}_{\rm nd}$ preserves the norm and generates rotations in operator space (nondissipative)\footnote{It can be shown that $\langle \rho, {\cal L}_{\rm nd}(\rho) \rangle$ is purely imaginary and changes only the phase or direction in operator space, not the length. Thus it does not contribute to the Hilbert–Schmidt norm change. See \arf{appn:ah:Lind} for the details of calculations.}. Equation (\ref{HS:norm:change}) indicates that $\mathcal L_{\rm d}$ governs state norm growth and/or decay. 

We quantify these features by introducing two scalar metrics: the \emph{dissipative strength}
\begin{equation} \label{dis:strength} 
	\delta(\mathcal L) := \|\mathcal L_{\rm d}\| = \max_{\lambda \in \mathrm{spec}(\mathcal L_{\rm d})} |\lambda|,
\end{equation}
which measures overall irreversibility\footnote{Entropy production rate when defined with respect to a steady state, is likewise controlled by the dissipative component of the generator. In this sense, $\delta(\mathcal L)$ can be viewed as a generator-level proxy for irreversibility that does not rely on steady-state structure.}. Here and throughout $\|\cdot\|$ denotes the operator norm induced by the Hilbert--Schmidt inner product: 
\begin{equation} \label{eqn:OptNorm:HS}
	\|\mathcal L\| := \sup_{X \neq 0} \frac{\|\mathcal L(X)\|_{\rm HS}}{\|X\|_{\rm HS}}.
\end{equation}
 Since $\mathcal L_{\rm d}$ is Hermitian with respect to the Hilbert--Schmidt inner product, its induced operator norm equals the spectral radius (\arf{app:hermitian_norm}). Thus, it captures the worst-case rate of contraction or expansion over all directions in operator space. Importantly, it quantifies only the magnitude of dissipative action and does not by itself imply monotonic decay. The second measure is the \emph{nonnormality}~\cite{TreEmb05}
\begin{equation} \label{nn:gen} 
	\eta(\mathcal L) := \left\|\left[\mathcal L, \mathcal L^\dagger\right]\right\|,
\end{equation}
which quantifies the extent to which the generator induces directional mixing in operator space, and vanishes exactly for normal generators.  Importantly, it provides a computationally inexpensive measure that correlates with the onset of pseudospectral effects. 

The quantities $\delta(\mathcal L)$ and $\eta(\mathcal L)$ are not independent. Using the commutator bound, one finds\footnote{We note that $\delta(\mathcal L)$ and $\eta(\mathcal L)$ do not define a closed parameter space, as the accessible region depends additionally on the scale of the anti-Hermitian component $\|\mathcal L_{\rm nd}\|$. Consequently, the  regimes introduced here should be understood as scaling regimes characterized by the relative magnitude of $\eta(\mathcal L)$ and $\delta(\mathcal L)^2$, rather than as sharply defined geometric regions in the $(\delta,\eta)$ plane..}
\begin{equation} \label{bound:on:eta}
	\eta(\mathcal L) \le  4\, \delta(\mathcal L)\, \|\mathcal L_{\rm nd}\|.
\end{equation}
In particular, this bound implies the following structural constraint.

\begin{proposition}[No directional flow without dissipation]
	If $\delta(\mathcal L)=0$, then $\eta(\mathcal L)=0$.
\end{proposition}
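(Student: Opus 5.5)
The plan is to derive the statement from the commutator bound \erf{bound:on:eta}, and also to give a direct algebraic check that does not depend on the constant in that bound. Write $\mathcal L = \mathcal L_{\rm d} + \mathcal L_{\rm nd}$ as in \erf{L:d:nd}. Taking Hilbert--Schmidt adjoints gives $\mathcal L^\dagger = \mathcal L_{\rm d} - \mathcal L_{\rm nd}$. Substituting both into the commutator and expanding bilinearly, the $[\mathcal L_{\rm d},\mathcal L_{\rm d}]$ and $[\mathcal L_{\rm nd},\mathcal L_{\rm nd}]$ terms vanish. What remains is
\begin{equation}
	[\mathcal L,\mathcal L^\dagger] = -[\mathcal L_{\rm d},\mathcal L_{\rm nd}] + [\mathcal L_{\rm nd},\mathcal L_{\rm d}] = 2\,[\mathcal L_{\rm nd},\mathcal L_{\rm d}].
\end{equation}
This identity is the key step. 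It shows that the nonnormality is carried entirely by the noncommutativity of the Hermitian and anti-Hermitian parts.

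Next we bound the norm. Submultiplicativity and the triangle inequality for the induced operator norm \erf{eqn:OptNorm:HS} give $\eta(\mathcal L) \le 2(\|\mathcal L_{\rm nd}\mathcal L_{\rm d}\| + \|\mathcal L_{\rm d}\mathcal L_{\rm nd}\|) \le 4\,\|\mathcal L_{\rm d}\|\,\|\mathcal L_{\rm nd}\|$. We then use the fact, established in \arf{app:hermitian_norm}, that $\|\mathcal L_{\rm d}\|$ equals the spectral radius $\delta(\mathcal L)$. This recovers \erf{bound:on:eta}.

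Finally, assume $\delta(\mathcal L)=0$. Because $\delta$ is an operator norm, this means $\mathcal L_{\rm d}$ is the zero superoperator, not merely an operator with zero spectral radius. This is exactly where Hermiticity matters: for a Hermitian operator, spectral radius zero forces the operator itself to vanish, whereas a nilpotent operator could have zero spectral radius without being zero. So $\mathcal L = \mathcal L_{\rm nd}$ is anti-Hermitian, $\mathcal L^\dagger = -\mathcal L$, and $[\mathcal L,\mathcal L^\dagger] = -[\mathcal L,\mathcal L] = 0$. Equivalently, the right-hand side of \erf{bound:on:eta} is $0$. Since the right-hand side is finite in finite dimensions, $\eta(\mathcal L)\ge 0$ forces $\eta(\mathcal L)=0$.

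There is no serious obstacle. The only delicate point is the sign bookkeeping in the commutator expansion together with the equality of norm and spectral radius for $\mathcal L_{\rm d}$, and the latter is assumed from the appendix.
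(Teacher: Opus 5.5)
Your proof is correct and follows the paper's argument. The paper likewise either invokes the bound $\eta(\mathcal L)\le 4\,\delta(\mathcal L)\,\|\mathcal L_{\rm nd}\|$ or notes directly that $\delta(\mathcal L)=0$ forces $\mathcal L=\mathcal L_{\rm nd}$ to be anti-Hermitian and hence normal, and your derivation of that bound from $[\mathcal L,\mathcal L^\dagger]=2[\mathcal L_{\rm nd},\mathcal L_{\rm d}]$ fills in a step the paper only asserts.
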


\begin{proof}
	This follows immediately from \erf{bound:on:eta}. Equivalently, $\delta(\mathcal L)=0$ implies $\mathcal L=\mathcal L_{\rm nd}$ is anti-Hermitian and hence normal, so $\eta(\mathcal L)=0$.
\end{proof}

Thus, nonnormal effects require a nonzero dissipative component of the generator: directional flow in operator space cannot arise without a nonzero Hermitian term of the generator. However, the converse is not true—dissipative generators may still be normal, and therefore normality does not imply reversible (unitary) dynamics.

  \begin{proposition}[Decoupling for normal generators]
 	Let $\mathcal L$ be a quantum dynamical generator with decomposition given in \erfa{L:d:nd}{Ld:Lnd}. If $\eta(\mathcal L)=0$, then the dynamical map factorizes as
 	\begin{equation*}
 		e^{t\mathcal L} = e^{t\mathcal L_{\rm d}}\; e^{t\mathcal L_{\rm nd}}.
 	\end{equation*}
 \end{proposition}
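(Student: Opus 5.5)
The plan is to reduce the statement to the standard fact that exponentials of commuting operators factorize. Since $\mathcal L$ acts on the finite-dimensional Hilbert--Schmidt space of operators, all superoperators involved are matrices and $e^{t\mathcal L}$ is given by a norm-convergent power series. First, I will use that $\eta(\mathcal L)=\|[\mathcal L,\mathcal L^\dagger]\|=0$ and that the induced operator norm \erf{eqn:OptNorm:HS} is a genuine norm. Together these give $[\mathcal L,\mathcal L^\dagger]=0$, so $\mathcal L$ is normal.

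Second, I will show that normality of $\mathcal L$ implies $[\mathcal L_{\rm d},\mathcal L_{\rm nd}]=0$. Substituting \erf{Ld:Lnd} and expanding bilinearly gives
\begin{equation*}
[\mathcal L_{\rm d},\mathcal L_{\rm nd}] = \tfrac14\big([\mathcal L,\mathcal L]-[\mathcal L,\mathcal L^\dagger]+[\mathcal L^\dagger,\mathcal L]-[\mathcal L^\dagger,\mathcal L^\dagger]\big) = -\tfrac12[\mathcal L,\mathcal L^\dagger].
\end{equation*}
The right-hand side vanishes by the first step. This identity is also worth recording as a by-product, since it yields $\eta(\mathcal L)=2\|[\mathcal L_{\rm d},\mathcal L_{\rm nd}]\|$. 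The same identity gives a natural route to \erf{bound:on:eta}: $\eta(\mathcal L)=2\|[\mathcal L_{\rm d},\mathcal L_{\rm nd}]\|\le 4\,\delta(\mathcal L)\,\|\mathcal L_{\rm nd}\|$.

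Third, I will apply the commuting-exponential identity: if $A$ and $B$ commute, then $e^{t(A+B)}=e^{tA}e^{tB}$. There are two ways to prove this. One is to expand $(A+B)^n$ by the binomial theorem, which is valid because the operators commute, and then rearrange the absolutely convergent Cauchy product. The other is to set $F(t)=e^{-tA}e^{t(A+B)}e^{-tB}$, check that $F'(t)=0$ using that $A$ commutes with $e^{tB}$, and note that $F(0)=\mathbb 1$. Taking $A=\mathcal L_{\rm d}$ and $B=\mathcal L_{\rm nd}$, together with \erf{L:d:nd}, gives the claimed factorization. The two factors also commute, so the order is immaterial.

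There is no serious obstacle in this argument. The only step that needs care is the commutator algebra in the second step, in particular the signs and the factor of $1/2$. One should also note explicitly that only $\eta=0$ (exact commutation) is used; no spectral or complete-positivity properties of $\mathcal L$ are needed.
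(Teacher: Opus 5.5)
Your proposal is correct and follows the paper's proof: the paper shows that $[\mathcal L_{\rm d},\mathcal L_{\rm nd}]=\tfrac12[\mathcal L^\dagger,\mathcal L]$ (the same as your $-\tfrac12[\mathcal L,\mathcal L^\dagger]$) vanishes when $\eta(\mathcal L)=0$, and then uses the factorization of exponentials of commuting operators. You additionally spell out two details the paper leaves implicit: since $\|\cdot\|$ is a genuine norm, $\eta(\mathcal L)=0$ gives exact commutation, and $e^{t(A+B)}=e^{tA}e^{tB}$ holds for commuting $A,B$.
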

 
 \begin{proof}
 If $[\mathcal L,\mathcal L^\dagger]=0$, then $[\mathcal L_{\rm d},\mathcal L_{\rm nd}] = \tfrac{1}{2}[\mathcal L^\dagger,\mathcal L] = 0$. Therefore, the factorization of the exponential follows from the commutativity of $\mathcal L_{\rm d}$ and $\mathcal L_{\rm nd}$.
 \end{proof}
 \noindent
Thus, the nonnormal case leads to an intrinsic coupling between dissipation and directional flow. We now define dynamical classes based on the dissipation strength and nonnormality measure.

\begin{definition}[Dynamical classes]
	For $\delta \ge 0$ and $\eta \ge 0$, define
	\begin{equation}
		\mathcal C(\delta,\eta) := \{ \mathcal L \; | \; \delta(\mathcal L) \in [0,\delta],\; \eta(\mathcal L) \in [0,\eta] \}.
	\end{equation}
\end{definition}

This definition induces a natural inclusion structure: if $\delta_1 \le \delta_2$ and $\eta_1 \le \eta_2$, then $\mathcal C(\delta_1,\eta_1) \subseteq \mathcal C(\delta_2,\eta_2)$. In particular, three important regimes emerge:
\begin{equation}
	\mathcal C(0,0) \subsetneq \mathcal C(\delta,0) \subsetneq \mathcal C(\delta,\eta),
\end{equation}
for $\delta,\eta \in {\mathbb R}^+$. Here, $\mathcal C(0,0)$ corresponds to closed (purely Hamiltonian) dynamics, $\mathcal C(\delta,0)$ to normal Lindbladians, and $\mathcal C(\delta,\eta)$ to generic open-system generators. These inclusions are strict, reflecting the increasing structural complexity of the dynamics.

The quantities $\delta(\mathcal L)$ and $\eta(\mathcal L)$ depend on the choice of operator norm. In this work, we employ norms induced by the Hilbert--Schmidt inner product (equivalently, spectral norms of matrix representations of $\mathcal L$), which provide basis-independent measures and enable direct comparison between generators. While different norm choices may lead to quantitative variations, the inclusion structure and the implication $\delta=0 \Rightarrow \eta=0$ remain unchanged.

\subsection{Geometric classification and dynamical regimes}
The classification induced by the pair $(\delta(\mathcal L),\eta(\mathcal L))$ admits a natural geometric representation, shown in \frf{fig:delta_eta_plane}. This construction provides a useful framework for identifying distinct dynamical  scaling-based  regimes. 
From the structural constraint $\delta(\mathcal L)=0 \Rightarrow \eta(\mathcal L)=0$, it follows that the only admissible point on the line $\delta=0$ is the origin. 
For any $\delta>0$, however, the nonnormality $\eta(\mathcal L)$ can become large relative to $\delta^2$ within bounded operator norm scaling.
Thus, the set of admissible generators corresponds to the region $\{(\delta,\eta): \delta>0,\ \eta\ge 0\}$ together with the isolated point $(0,0)$. 

Within this representation, the origin $(0,0)$ corresponds to purely Hamiltonian dynamics (blue point), the line $\eta=0$ with $\delta>0$ corresponds to normal dissipative generators with no directional flow (purple line), and the interior region $\delta>0,\ \eta>0$ captures generic nonnormal open systems exhibiting directional flow in operator space (gray region). This visualization highlights that nonnormality is not bounded above by dissipative strength, but rather requires it: directional flow can only arise in the presence of dissipation, yet can become arbitrarily large even for weakly dissipative generators (which requires large $\|\mathcal L_{\rm nd}\|$ through \erf{bound:on:eta}).

\begin{figure}[t]
	\centering
	\begin{tikzpicture}[scale=1.1]
		
		\fill[gray!15] (0.05,0) rectangle (6,6);
		
		\draw[-{Latex}, thick] (0,0) -- (6,0) node[right] {$\bm{\delta(\mathcal L)}$};
		\draw[-{Latex}, dashed, very thick] (0,0) -- (0,6) node[above] {$\bm{\eta(\mathcal L)}$};
		
		\draw[red,very thick] (0.05,0) -- (5.8,0);
		
		\filldraw[blue] (0,0) circle (2pt) node[below left, blue] {$\bm{(0,0)}$};;
		
		\fill[teal!30, opacity=0.4] (4.95,0.68) ellipse (0.7 and 0.5);
		\node[teal!50!black, align=center] at (4.95,0.7) {\textbf{Weakly}\\\textbf{nonnormal}};
		
		\draw[violet, thick, dotted] plot[domain=0.05:2.45] (\x, {\x*\x});
		\node[violet, rotate=70] at (1.5,1.4) {\textbf{Crossover}};
		\node[violet, rotate=75] at (2.55,5.4) {$\eta = \delta^2$};
		
		\draw[olive, thick, dash dot] (0,0) -- (1.5,6.01);
		\node[olive, rotate=75] at (1.05,3.0) {$\eta \le 4\,\delta\,\|\mathcal L_{\rm nd}\|$};
		
		\fill[brown!30, opacity=0.6] (1.2,5.3) ellipse (0.7 and 0.5);
		\node[brown!50!black, align=center] at (1.22,5.32) {\textbf{Strongly}\\\textbf{nonnormal}};
		
		\node[align=center] at (3.5,3.0) {\textbf{Generic open}\\\textbf{quantum systems}};
		
		\node[below, red] at (3,0) {\textbf{Normal dissipative}};
		
		\node[above right, blue, rotate=90] at (-0.05,0) {\textbf{Hamiltonian}};
		
	\end{tikzpicture}
	\caption{
	Geometric representation of quantum dynamical generators in the $(\delta,\eta)$ plane. 
	The origin $(0,0)$ corresponds to purely Hamiltonian dynamics (blue point), while the line $\eta=0$ with $\delta>0$ represents normal dissipative generators ( red  line). 
	The shaded region $\delta>0,\ \eta>0$ captures generic nonnormal open systems (gray area).
	The dimensionless ratio $\kappa(\mathcal L)=\eta(\mathcal L)/\delta(\mathcal L)^2$, \erf{eqn:eta-delta:ratio}, distinguishes different dynamical regimes. 
	The dotted violet curve indicates the crossover regime $\kappa(\mathcal L)\sim 1$. 
	The highlighted ovals illustrate the asymptotic regimes: weakly nonnormal systems ($\kappa(\mathcal L)\ll 1$) are shown in teal (bottom-right), where dissipation dominates, while strongly nonnormal systems ($\kappa(\mathcal L)\gg 1$) are shown in brown (top-left), where nonnormal effects dominate. 
	These regions are indicative rather than sharply defined boundaries. In addition, the dotted dashed olive line indicates a representative upper bound, \erf{bound:on:eta}, illustrating that nonnormality is structurally constrained and cannot be increased independently of the dissipative and anti-Hermitian components.
	}
	\label{fig:delta_eta_plane}
\end{figure}

While the region $\delta>0,\ \eta>0$ encompasses generic open-system dynamics, it does not admit a complete classification in terms of these scalar quantities alone. Nevertheless, meaningful  scaling  regimes can be identified based on the relative scaling of dissipative strength and nonnormality, \erf{eqn:eta-delta:ratio}. Analyzing these regimes provides useful insight into the resulting dynamics.
In particular, in the weakly nonnormal regime (teal oval), for which $\eta \ll \delta^2$, the generator is close to normal and transient amplification remains perturbatively small, so that the propagator norm is well approximated by purely exponential behavior governed by $\delta(\mathcal L)$. 
In contrast, in the strongly nonnormal regime (brown oval), for which $\eta \gg \delta^2$, nonnormal effects become significant and can induce substantial transient amplification, increasing sensitivity to perturbations and numerical error. 
The crossover regime $\eta \sim \delta^2$ marks the transition to this behavior and is depicted by the dotted violet curve in \frf{fig:delta_eta_plane}. The dash-dotted olive line depicts a representative upper bound, \erf{bound:on:eta}, showing that increasing nonnormality necessarily entails increasing either dissipation or the magnitude of the anti-Hermitian component.

This framework establishes a hierarchy of quantum dynamical generators based on dissipative strength and nonnormality, providing a unified, generator-level characterization of irreversibility and directional flow that will be used throughout the remainder of this work.

\subsection{Operational interpretation of metrics and connection to simulation complexity} \label{sec:formalism:metrics}

The quantities $\delta(\mathcal L)$ and $\eta(\mathcal L)$ introduced above admit a direct operational interpretation in terms of the propagator ${\rm exp}({t\mathcal L})$ and the stability of quantum evolution.
We first consider the role of the dissipative component $\mathcal L_{\rm d}$. From \erf{HS:norm:change}, one obtains the bound (consult \arf{app:propagation_bounds} for the details of derivations) 
\begin{equation}
	\frac{d}{dt} \|\rho(t)\|_{\rm HS}^2 
	\le 2\delta(\mathcal L)\, \|\rho(t)\|_{\rm HS}^2.
\end{equation}
Applying Gr\"{o}nwall’s inequality yields~\cite{Gronwall1919}
\begin{equation} \label{eq:prop_bound_main}
	\|\rho(t)\|_{\rm HS} \le e^{t\delta(\mathcal L)} \|\rho(0)\|_{\rm HS}.
\end{equation}
Thus, $\delta(\mathcal L)$ sets the maximal exponential rate of norm change and defines an intrinsic dynamical timescale $\tau \sim 1/\delta(\mathcal L)$. In particular, it controls the overall magnitude of evolution over time $t$.

For normal generators, $\eta(\mathcal L)=0$,
the propagator factorizes as $e^{t\mathcal L} = e^{t\mathcal L_{\rm d}} e^{t\mathcal L_{\rm nd}}$. Since $\mathcal L_{\rm nd}$ is anti-Hermitian, $e^{t\mathcal L_{\rm nd}}$ is norm-preserving, and one obtains
\begin{equation} \label{eqn:map:norm:normal}
	\left\|e^{t\mathcal L}\right\| = \left\|e^{t\mathcal L_{\rm d}}\right\| \le e^{t\delta(\mathcal L)}.
\end{equation}
 The propagator norm $\left\|e^{t\mathcal L}\right\|$ quantifies the maximum possible amplification of perturbations under the dynamics\footnote{If the initial state is perturbed $\rho(0) \to \rho(0) + \tilde\delta \rho(0)$, then $\left\|\tilde\delta \rho(t) \right\| \le 	\left\|e^{t\mathcal L}\right\| 	\left\|\tilde\delta \rho(0)\right\|$.  } and therefore characterizes the stability of quantum evolution in operator space. In this case, the dynamics is fully controlled by spectral properties, and no transient amplification beyond exponential scaling, set by $\delta(\mathcal L)$, occurs.

In contrast, for nonnormal generators with $\eta(\mathcal L) > 0$, the propagator may exhibit transient amplification that is not captured by the spectrum alone. To isolate this effect, we factor out the baseline exponential scaling (set by the dissipative component) and write
\begin{equation} \label{eqn:norm:amp}
	\left\|e^{t\mathcal L}\right\| = A(t)\, e^{t\delta(\mathcal L)},
\end{equation}
where $A(t)$ is a time-dependent amplification factor, with $A(t)=1$ in the normal case and we refer to $A(t) > 1$ as transient amplification. By construction, $A(t)$ captures deviations from the purely exponential behavior governed by $\delta(\mathcal L)$ and reflects the nonorthogonality of eigenmodes of the generator and the resulting directional flow in operator space (see \arf{app:propagation_bounds} for further discussion). 

These bounds have direct implications for the propagation of numerical errors. If an approximation to the propagator incurs an error of size $\epsilon$, then the resulting state error satisfies:
\begin{equation} \label{eqn:state:error:gen}
	\|\Delta \rho(t)\| \;\lesssim\; \epsilon\,\left\|e^{t\mathcal L}\right\|
	\; = \; \epsilon\,A(t)\, e^{t\delta(\mathcal L)}.
\end{equation}
Thus, nonnormality increases the effective precision requirement by a factor $A(t)$, amplifying errors beyond what is expected from dissipative scaling alone.

From the perspective of simulation complexity, this leads to a clear separation of roles. The dissipative strength $\delta(\mathcal L)$ sets the intrinsic timescale of evolution and determines the baseline exponential scaling of both the propagator norm and numerical error amplification over the simulation time. In contrast, the nonnormality $\eta(\mathcal L)$ controls the deviation from normal-mode propagation and acts as a condition-number–like enhancement of perturbation growth beyond the baseline set by $\delta$, and thus governs the sensitivity of the dynamics to finite precision $\epsilon$ beyond this baseline. In particular, generators with $\eta(\mathcal L)=0$ exhibit stable evolution with no transient amplification, while for $\eta(\mathcal L)>0$, the growth of $A(t)$ increases simulation overhead associated with achieving a fixed accuracy.

\section{Classes of quantum dynamical systems} \label{sec:classes}

The framework introduced in Sec.~\ref{sec:formalism} provides a natural way to organize quantum dynamical generators according to their dissipative strength and degree of nonnormality. In this section, we illustrate this structure by examining representative dynamical classes and selected  illustrative  examples where appropriate.  Several of the examples discussed below can be viewed as limiting cases of a driven dissipative qubit with dephasing and relaxation, illustrating how different dynamical regimes emerge from the relative scaling of coherent and dissipative processes. 

While the classification defined by $(\delta,\eta)$ is complete at the level of these scalar quantities, the interior region $\delta>0,\ \eta>0$ encompasses a wide variety of qualitatively distinct dynamics and does not admit a full classification in terms of these quantities alone. 
Nevertheless, the extremal and boundary cases already capture important physical regimes and provide useful intuition for the role of dissipation and nonnormality.

We therefore consider three representative classes: 
(i) purely Hamiltonian systems with $\delta=0$ and $\eta=0$, 
(ii) normal dissipative systems with $\delta>0$ and $\eta=0$, and 
(iii) generic nonnormal open systems with $\delta>0$ and $\eta>0$. 
For each class, we analyze the structure of the generator and discuss its implications for dynamical behavior and stability.

Before proceeding, we briefly comment on the computational complexity of simulating these classes of dynamics. While we do not provide detailed derivations here, it is useful to note that for purely Hamiltonian dynamics ($\delta=\eta=0$), efficient simulation algorithms achieve linear scaling in evolution time with additive logarithmic dependence on precision~\cite{LowChu17}. When dissipation is present but the generator remains normal ($\delta > 0$, $\eta=0$), similar scaling behavior is expected~\cite{BorMar25}, albeit with coefficients reflecting the dissipative strength through norm bounds on the generator. In the more general nonnormal regime ($\delta > 0$, $\eta > 0$), transient amplification can increase simulation costs by coupling time and precision requirements. Further discussion can be found in Appendices~\ref{app:propagation_bounds} and \ref{app:weak_nonnormal_bound}.

\subsection{Hamiltonian dynamics: $\mathcal C(0,0)$}

The class $\mathcal C(0,0)$ corresponds to purely Hamiltonian dynamics, for which the generator takes the form
\begin{equation} \label{eqn:H:lindbladian}
	\mathcal L_{\rm H}(\rho) = -i[\hat H,\rho].
\end{equation}
In this case, the generator is anti-Hermitian with respect to the Hilbert--Schmidt inner product, and therefore $\mathcal L_{\rm d} = 0$ and $\mathcal L_{\rm nd} = \mathcal L_{\rm H}$. It follows immediately that
\begin{equation} \label{constraint:H}
	\delta(\mathcal L_{\rm H}) = 0, \qquad \eta(\mathcal L_{\rm H}) = 0,
\end{equation}
so that closed quantum systems occupy the isolated point at the origin, $(\delta,\eta) = (0,0)$ in Fig.~\ref{fig:delta_eta_plane}.
The corresponding propagator,
 $e^{t\mathcal L_{\rm H}}(\rho) = e^{-i\hat H t}\, \rho\, e^{i\hat H t}$, preserves the Hilbert--Schmidt norm:
\begin{equation}
	\|e^{t\mathcal L_{\rm H}}(\rho)\|_{\rm HS} = \|\rho\|_{\rm HS}.
\end{equation}
Equivalently, the induced operator norm satisfies $	\left\|e^{t\mathcal L_{\rm H}}\right\| = 1$, indicating that no amplification or contraction of perturbations occurs ($A(t)=1$ in \erf{eqn:norm:amp}). Therefore, such dynamics can be viewed as generating coherent transport on the space of operators, without any dissipative attenuation.

From the perspective of simulation complexity, closed quantum systems are characterized by \erfs{eqn:H:lindbladian}{constraint:H},  and preserve the dynamical map's norm. Therefore the simulation cost takes the form 
\begin{equation}
\mathrm{cost} = O \left[ t \left\|\mathcal{L}_{\rm H}\right\| + \log\!\left({1}/{\epsilon^*}\right) \right],
\end{equation}
where $\epsilon^*$ denotes the target simulation accuracy~\cite{BerSan07,ChiDam10}, and the first term in square brackets is the effective dimensionless evolution time. For the Hamiltonian evolution one has $\|\mathcal{L}_{\rm H}\| = 2 \|\hat H\|$, and the complexity reduces to the standard linear-in-time scaling with additive logarithmic dependence on precision. For $\|\hat H\|= O(1)$, this yields the optimal scaling $O(t + \log(1/\epsilon^*))$, consistent with established Hamiltonian simulation algorithms~\cite{LowChu17, ChiSu18}. This class therefore provides a natural reference point for assessing the impact of dissipation and nonnormality in more general quantum dynamical systems.

A representative example is provided by the Jaynes--Cummings model, describing the coherent interaction between a two-level system and a single bosonic mode. 
In the absence of dissipation, the dynamics is fully unitary and exhibits coherent oscillations between excitation subspaces. Within the $(\delta,\eta)$ framework, the dynamics remains confined to the origin regardless of system parameters, reflecting the absence of both dissipation and nonnormal effects.

\subsection{Normal dissipative dynamics: $\mathcal C(\delta,0)$}

We now consider dissipative generators that remain \emph{normal}, i.e., $\eta(\mathcal L_{\rm n})=0$ while $\delta(\mathcal L_{\rm n})>0$. 
As discussed in Sec.~\ref{sec:formalism:metrics}, the anti-Hermitian component generates norm-preserving evolution, while the Hermitian component governs contraction or growth.
Consequently, the propagator norm is determined entirely by the dissipative strength, as in \erf{eqn:map:norm:normal}.

Our goal is not to describe the most general form of each class, but rather to isolate the structural features captured by the quantities $\delta(\mathcal L_{\rm n})$ and $\eta(\mathcal L_{\rm n})$.
To this end, here we restrict to purely dissipative normal generators without Hamiltonian terms for simplicity. This simplification allows us to directly expose the role of the dissipative component $\mathcal L_{\rm d}$ in setting the scale of evolution. We emphasize, however, that while the Hamiltonian contribution does not affect the dissipative strength $\delta(\mathcal L)$, it contributes to the anti-Hermitian component $\mathcal L_{\rm nd}$ and can induce nonnormality through the commutator $[\mathcal L_{\rm d},\mathcal L_{\rm nd}]$, generically leading to $\eta(\mathcal L)>0$ \footnote{Importantly, the presence of a Hamiltonian contribution does not, by itself, imply nonnormality. Normal generators may still include coherent dynamics, provided the anti-Hermitian and dissipative components commute, so that $\eta(\mathcal L)=0$.}.

For normal generators, the evaluation of $\delta(\mathcal L_{\rm n})$ simplifies significantly. Since $\mathcal L_{\rm n}$ is diagonalizable in an orthonormal operator basis, the Hermitian component $\mathcal L_{\rm d}$ has eigenvalues given by the real parts of the eigenvalues of $\mathcal L_{\rm n}$. As a result,
\begin{equation} \label{eqn:dis:str:normal}
	\delta(\mathcal L_{\rm n}) = \max_\beta |\mathrm{Re}(\lambda_\beta)|, \quad {\rm for}\; \eta(\mathcal L_{\rm n})=0,
\end{equation}
where $\{\lambda_\beta\}$ are the eigenvalues of $\mathcal L_{\rm n}$. Thus, for normal dissipative dynamics, the dissipative strength is completely determined by the spectrum of the generator.

Therefore, in contrast to the nonnormal case, no transient amplification occurs and the dynamics is entirely characterized by exponential scaling set by $\delta(\mathcal L_{\rm n})$. This makes normal dissipative generators a particularly tractable class, both analytically and from the perspective of quantum simulation. In what follows, we consider two examples.

\subsubsection{Pure dephasing} \label{sec:pure:deph}
Consider a single qubit undergoing pure dephasing with Lindblad operator $\hat L_{\rm z} = \sqrt{\gamma_{\rm z}}\, \hat \sigma_{\rm z}$ such that:
\begin{equation} \label{diss:pure:deph}
	\mathcal L_{\rm z}(\rho) = \gamma_{\rm z} \left( \hat \sigma_{\rm z} \rho \hat \sigma_{\rm z} - \rho \right),
\end{equation}
where $\gamma_{\rm z} \in {\mathbb R}^+$ is the dephasing rate, and we assume the Hamiltonian $\hat H=0$. The Pauli operators are denoted by $(\hat \sigma_{\rm x}, \hat \sigma_{\rm y}, \hat \sigma_{\rm z})$. This generator is Hermitian in the Hilbert--Schmidt inner product, so that
\begin{equation}
	\mathcal L_{\rm d} = \mathcal L_{\rm z}, \qquad \mathcal L_{\rm nd}=0.
\end{equation}
 It follows immediately that\footnote{In the Pauli operator basis, the generator acts diagonally with eigenvalues 
 	$\{0,-2\gamma_{\rm z},-2\gamma_{\rm z},0\}$, so that $\|\mathcal L_{\rm z}\|$ (the largest singular value) is $2\gamma_{\rm z}$.}
\begin{equation}
	\delta(\mathcal L_{\rm z}) = 2\gamma_{\rm z}, 
	\qquad
	\eta(\mathcal L_{\rm z}) = 0.
\end{equation}
Thus, the Lindblad structure produces a doubling of the decay rate at the superoperator level. In other words, $\delta(\mathcal L_{\rm z})$ is not just a physical rate--it is a worst-case  operator-space rate.

Physically, the dynamics suppresses coherences in the computational basis while leaving populations invariant. In operator space, this corresponds to anisotropic contraction along off-diagonal directions, with no rotational component. Since the generator is normal, the evolution exhibits no transient amplification. Consequently, both the propagator norm and the induced evolution of states are fully controlled by the exponential scaling set by the dissipation strength, as reflected in~\erfa{eq:prop_bound_main}{eqn:map:norm:normal}.

This example extends straightforwardly to $K$-qubit systems with independent dephasing channels, where $\delta(\mathcal L^{K}_{\rm z}) = 2 \sum_{k=1}^K \gamma_{{\rm z}_k}$ scales additively with the number of qubits, while $\eta(\mathcal L^{K}_{\rm z})$ remains zero (See \arf{app:multi_qubit_dephasing} for the details of calculations). This demonstrates that independent dephasing provides a family of normal Lindbladians with extensive dissipative strength but no transient amplification.

\subsubsection{Structured dissipators}
A broader class of normal dissipative generators arises when the Lindblad operators satisfy
\begin{equation} \label{eqn:sd:condition}
	\sum_j \hat L_j^\dagger \hat L_j = \Gamma \, \hat I,
\end{equation}
for some $\Gamma \in {\mathbb R}^+$. In this case, the dissipative part of the generator simplifies to ($\hat H=0$)
\begin{equation}
	\mathcal L_{\rm s}(\rho) = \sum_j L_j \rho L_j^\dagger - \Gamma \rho.
\end{equation}
Such generators often arise in Pauli channels and in constructions used for dissipative state preparation. Under suitable symmetry conditions (e.g., mutually orthogonal jump operators forming a unitary operator basis), the generator is normal, $\eta(\mathcal L)=0$.
In these cases, the dissipative strength is set by the overall rate $\Gamma$, while the absence of nonnormality ensures that \erf{eqn:map:norm:normal} is satisfied with no additional amplification factor.

When the Lindblad operators obey the constraint in \erf{eqn:sd:condition}, the generator takes the form $\mathcal L_{\rm s} = \mathcal J - \Gamma\,\mathcal{I}$, where $\mathcal J(\rho) \coloneq \sum_j L_j \rho L_j^\dagger$ is a completely positive map (also known as the quantum jump map), and $\mathcal I$ is the identity superoperator. In this case, the spectrum $\{\lambda_\beta\}$ of $\mathcal L_{\rm s}$ is given by a uniform shift of the spectrum $\{\Lambda_\beta\}$ of $\mathcal J$. That is, $\lambda_\beta = \Lambda_\beta - \Gamma$, and the dissipation strength is obtained through \erf{eqn:dis:str:normal}. While this condition imposes strong structural constraints, it does not uniquely determine the spectrum of $\mathcal L_{\rm s}$ without additional assumptions on the operators $\{L_j\}$.

The spectrum becomes analytically tractable when the map $\mathcal J(\rho)$ is diagonal in a simple orthonormal operator basis. This occurs, for example, in Pauli channels, where conjugation by Pauli operators preserves the Pauli basis up to a sign, making all Pauli operators eigenoperators of $\mathcal J$, despite the fact that the underlying operators do not commute. In \arf{appn:1qb:1pauli} we consider a single qubit subject to the general Pauli channel.

The above structure has direct implications for simulation complexity. Since $\eta(\mathcal L_{\rm n})=0$, the propagator norm obeys \erf{eqn:map:norm:normal}, and numerical errors are amplified only by the baseline exponential factor set by the dissipative strength\footnote{Normal dissipative generators define a regime in which dissipation and unitary-like rotations are decoupled, and no transient amplification beyond exponential scaling occurs. Consequently, the amplification factor satisfies $A(t)=1$ for all $t$.}. 
For a target accuracy $\epsilon^*$, this implies that the required precision per step scales as $\epsilon \sim \epsilon^* e^{-t\delta(\mathcal L_{\rm n})}$ via \erf{eqn:state:error:gen}. Consequently, Using known efficient Hamiltonian- and Lindbladian-simulation algorithms~\cite{BorMar25}, the overall computational cost scales as
\begin{equation} \label{eqn:cost:normal}
	\mathrm{cost} = {O} \left[
	t\,\delta(\mathcal L_{\rm n})
	+ \log\!\left({1}/{\epsilon^*}\right)
	\right],
\end{equation}
 exhibiting additive dependence on evolution time and target precision, without additional overhead from transient effects (where optimal simulation algorithms achieve linear scaling in $t$, while more general methods may incur higher polynomial overhead). In this regime, the induced operator norm of the generator coincides with that of its Hermitian part, $\|\mathcal L_{\rm n}\| = \|\mathcal L_{\rm d}\| = \delta(\mathcal L)$. This regime therefore provides a natural benchmark for efficient simulation and a baseline against which the impact of nonnormality can be quantified. In contrast, nonnormal generators introduce transient amplification through $A(t)$, which effectively couples dissipative and  nonnormal  effects in the error propagation bound.

 \subsection{Nonnormal dissipative dynamics: $\mathcal C(\delta>0,\eta>0)$} \label{sec:nonnormal}
 
 We now turn to the generic class of open quantum systems for which both dissipative strength and nonnormality are nonzero. In contrast to the \emph{normal} case, the generator is no longer diagonalizable in an orthonormal operator basis, and the interplay between its Hermitian and anti-Hermitian components can lead to qualitatively richer dynamics, including transient amplification.

 To analyze this behavior, we recall \erf{L:d:nd} and the fact that $[\mathcal L_{\rm d},\mathcal L_{\rm nd}] \neq 0$. The deviation from normality is then encoded in nested commutators that enter the expansion of the propagator. A convenient way to formalize this is to use the interaction picture with respect to the dissipative part which yields the identity (see \arf{app:weak_nonnormal_bound} for details)
 \begin{equation}
 	e^{t\mathcal L} = e^{t\mathcal L_{\rm d}} \, \mathfrak T\left[ \exp\!\left( \int_0^t ds \, \widetilde{\mathcal L}_{\rm nd}(s) \right)\right], 
 \end{equation}
 where $\widetilde{\mathcal L}_{\rm nd}(s) = e^{-s\mathcal L_{\rm d}} \mathcal L_{\rm nd}\, e^{s\mathcal L_{\rm d}}$, and $\mathfrak T$ denotes the time-ordering symbol which orders superoperators according to their time arguments. The time dependence of $\widetilde{\mathcal L}_{\rm nd}(s)$ is governed by the commutator $[\mathcal L_{\rm d},\mathcal L_{\rm nd}]$, and admits an expansion in nested commutators. To leading order,
 \begin{equation}
 	\widetilde{\mathcal L}_{\rm nd}(s)
 	= \mathcal L_{\rm nd} + s [\mathcal L_{\rm nd}, \mathcal L_{\rm d}] + O(s^2).
 \end{equation}
Taking norms and using the definition of $\eta(\mathcal L)$ together with standard submultiplicativity arguments, one obtains (see Appendix~\ref{app:weak_nonnormal_bound} for the full bound)
 \begin{equation} \label{prop:norm:gen}
 	\left\| e^{t\mathcal L} \right\|
 	\;\le\;
 	e^{t\delta(\mathcal L)} \, \exp\!\big(
 	 \tfrac{1}{4} t^2 \eta(\mathcal L)
 	+ \cdots
 	\big),
 \end{equation}
 where the ellipsis denotes higher-order contributions arising from nested commutators. Here
 the omitted linear contribution $t\|\mathcal L_{\rm nd}\|$ arises from intermediate submultiplicativity estimates and does not contribute to intrinsic amplification, since $e^{t\mathcal L_{\rm nd}}$ is norm-preserving in the Hilbert--Schmidt inner product (see Appendix~\ref{app:weak_nonnormal_bound} for a refined treatment). In contrast, the quadratic and higher-order terms encode genuine nonnormal effects associated with commutator growth between $\mathcal L_{\rm d}$ and $\mathcal L_{\rm nd}$.
 
 While these contributions are higher-order in $`t$' for short times, they do not form a parametrically suppressed hierarchy at timescales $t \sim 1/\delta(\mathcal L)$. Instead, the resulting corrections can be organized in terms of the dimensionless ratio\footnote{While the classification is expressed in terms of $\eta(\mathcal L)$ and $\delta(\mathcal L)$, the transition between nonnormal regimes is governed by the dimensionless ratio. The  regimes should therefore be interpreted as scaling regimes determined by the relative magnitude of  $\eta$ and $\delta^2$, rather than  as  sharply separated regions in the $(\delta,\eta)$ plane.}(see \arf{app:nonnormal_regimes} for a derivation)
 \begin{equation}  \label{eqn:eta-delta:ratio}
 	\kappa(\mathcal L)\equiv \frac{\eta(\mathcal L)}{[\delta(\mathcal L)]^2},
 \end{equation}
 so that deviations from purely exponential behavior are controlled by this dimensionless ratio\footnote{Across all regimes, $\kappa(\mathcal L)$ serves as the organizing parameter controlling the departure from purely spectral dynamics toward transient amplification governed by nonorthogonality.}.
 This quantity is meaningful as a measure of nonnormality only when $\delta(\mathcal L)$ is finite. In the limit $\delta(\mathcal L)\to 0$, the generator becomes normal, and the apparent divergence of $\kappa(\mathcal L)$ reflects a singular normalization rather than genuine nonnormal behavior.
 Combining these observations, one obtains
 \begin{equation}
 	\|e^{t\mathcal L}\|
 	\le
 	e^{t \delta(\mathcal L)}
 	\exp\left( O[\kappa(\mathcal L)] \right),
 \end{equation}
 so that $\kappa(\mathcal L)$ provides a dimensionless measure of nonnormality relative to dissipative decay, and naturally organizes the dynamical behavior. In terms of the amplification factor introduced in Sec.~\ref{sec:formalism:metrics}, this corresponds to $A(t) \equiv \exp\left( O[\eta t^2] \right)$.
 
 In what follows, we distinguish three representative regimes.

\subsubsection{Weakly nonnormal regime}

When $\kappa(\mathcal L) \ll 1$, nonnormal effects are perturbative relative to dissipative decay, and the dynamics is well approximated by that of a normal generator with the same dissipative strength. In this regime, transient amplification remains weak and is controlled by the dimensionless ratio $\kappa(\mathcal L)$.

More precisely, for timescales $t \sim 1/\delta(\mathcal L)$, the contribution of the time-ordered exponential remains close to norm-preserving, up to multiplicative corrections of order $\exp(O[\kappa(\mathcal L)])$. Consequently, the propagator norm deviates from purely exponential decay only by a small multiplicative factor, and transient amplification satisfies $A(t)=1+O(\kappa(\mathcal L)) \approx 1$.

This has direct implications for simulation complexity. The amplification of numerical errors differs from the normal case only by a constant factor independent of evolution time. As a result, weakly nonnormal generators inherit the same asymptotic complexity scaling as normal dissipative systems, Eq.~\ref{eqn:cost:normal}. Nonnormal effects therefore contribute only subleading corrections to prefactors, without altering the time–precision tradeoff, \ie without introducing of additional coupling between time and precision requirements).

\paragraph*{Weakly driven dephasing.}
A simple realization of the weakly nonnormal regime is provided by a single qubit subject to dephasing along $\hat\sigma_{\rm z}$ and a weak coherent drive along $\hat\sigma_{\rm x}$. The master equation obeys \erf{eqn:GKSL} where the Lindblad operator is $\hat L_{\rm z} = \sqrt{\gamma_{\rm z}}\, \hat \sigma_{\rm z}$, and the Hamiltonian reads $\hat H = \tfrac{\Omega}{2}\,\hat\sigma_{\rm x}$.
The dissipative strength is $\delta({\mathcal L}_{{\rm z}_{\rm H}}) = 2\gamma_{\rm z}$, while the nonnormality is determined by the commutator $[\mathcal L_{\rm d},\mathcal L_{\rm nd}]$. A direct calculation yields 
$\eta({\mathcal L}_{{\rm z}_{\rm H}}) = 2\|[\mathcal L_{\rm d},\mathcal L_{\rm nd}]\| = O(\gamma_{\rm z}\Omega)$.
It follows that $\kappa(\mathcal L_{{\rm z}_{\rm H}}) = O\!\left({\Omega}/{\gamma_{\rm z}}\right)$,
so that for $\Omega \ll \gamma_{\rm z}$ the generator lies in the weakly nonnormal regime.
In this case, the dynamics is dominated by exponential decay of coherences, with only small transient distortions induced by the noncommuting Hamiltonian contribution.

\subsubsection{Crossover regime}

When $\kappa(\mathcal L) \sim 1$, nonnormal effects become comparable to dissipative decay, and the perturbative description of the weakly nonnormal regime breaks down. As discussed above, the deviation from purely exponential behavior is controlled by the ratio in \erf{eqn:eta-delta:ratio}, which is now $O(1)$.
Therefore, one obtains
\begin{equation} \label{gen:norm:crossover}
	\left\|e^{t\mathcal L}\right\|
	\le
	e^{t\delta(\mathcal L)} \exp\big(O(1)\big),
\end{equation}
so that the amplification factor remains finite and satisfies $A(t) = O(1)$, with a prefactor that is not parametrically small\footnote{The amplification factor satisfies $A(t) \le \exp(O(\kappa(\mathcal L)))$, which remains $O(1)$ in the crossover regime.}. Thus, the contribution of the time-ordered exponential to the propagator norm is no longer perturbative, but remain bounded. In this regime, the propagator may exhibit order-one transient amplification at intermediate times before eventual decay sets in. While such amplification is not parametrically large, it reflects the breakdown of purely spectral control and can lead to nonmonotonic evolution in operator norms.

Physically, this behavior reflects the competition between noncommuting components of the generator. While $\mathcal L_{\rm d}$ enforces contraction along certain directions in operator space, the nonnormal component $\mathcal L_{\rm nd}$ can rotate states into directions that are less  damped, leading to temporary growth in norm. In contrast to the weakly nonnormal regime, these effects are not parametrically suppressed.

The crossover regime also has important implications for simulation complexity. Since $A(t)$ remains bounded by a constant independent of evolution time, the propagator norm retains the form in \erf{gen:norm:crossover}. As a result,  f or a target accuracy $\epsilon^*$, the required precision per step satisfies
\begin{equation}
	\epsilon \sim \epsilon^* \, e^{-O(1)}.
\end{equation} 
Thus, since $A(t) = O(1)$, the per-step precision only requires an $O(1)$ overhead relative to the target global error $\epsilon^*$. Unlike the strongly nonnormal regime, there is no parametrically large amplification, and the simulation cost remains essentially controlled by the dissipative strength $\delta(\mathcal L)$.

Consequently, the overall simulation cost retains the same asymptotic scaling as in the normal case, \erf{eqn:cost:normal}, but with nonuniversal constant prefactors arising from transient amplification. This reflects the emergence of coupling between dissipative and nonnormal effects, even though no parametrically large overhead is incurred. In this sense, the crossover regime marks the transition between structurally stable dynamics (controlled entirely by $\delta$) and regimes where nonnormality qualitatively affects short-time behavior without yet altering asymptotic scaling.

\begin{figure}[t]
	\centering
	\includegraphics[width=0.99\linewidth]{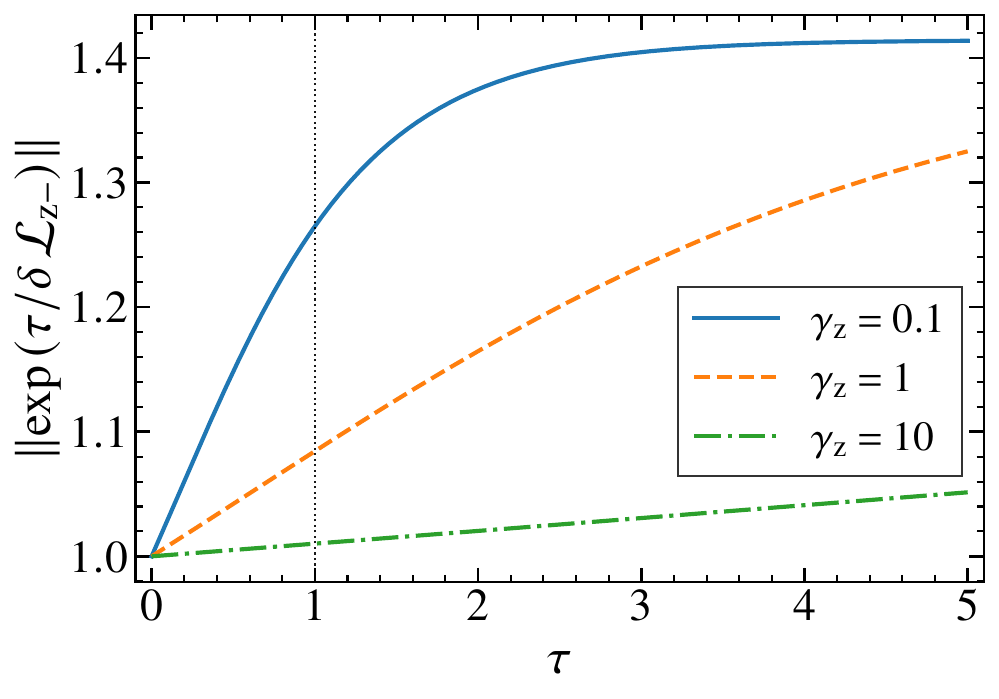}
	\caption{ Propagator norm $\left\|\exp\left[{\frac{\tau}{\delta(\mathcal{L}_{\rm z-})} \,\mathcal{L}_{\rm z-}}\right]\right\|$ for a single-qubit Lindbladian with
		competing dephasing ($\gamma_{\rm z}$) and amplitude damping ($\gamma_-$).
		The solid blue, dashed orange, and dash-dotted green curves correspond to
		$\gamma_{\rm z}=0.1, 1$, and $10$, respectively, with
		$\gamma_- = 1$ fixed. The vertical dotted line at $\tau=1$ marks the dimensionless transient timescale corresponding to the characteristic times $t=1/\delta(\mathcal L_{\rm z-})$. Increasing $\gamma_{\rm z}$ enhances the dissipative strength while leaving the nonnormality fixed by $\gamma_-$, leading to a progressive suppression of transient amplification. The strongest deviation from purely dissipative behavior occurs when $\kappa(\mathcal L_{\rm z-})$ is of order unity. The intermediate case $\gamma_{\rm z} = \gamma_-$ lies between these two limits and exhibits correspondingly moderate transient effects.
	}
	\label{fig:dephasing_relaxation_norm}
\end{figure}

\paragraph*{Competing dissipative channels.}
A minimal example is provided by a single qubit subject to two noncommuting dissipative processes. The Lindbladian, \erf{eqn:GKSL}, is constructed by two jump operators: dephasing process is represented by $\hat L_{\rm z}$ and the Lindblad operator associated with relaxation phenomenon is $\hat L_{-} = \sqrt{\gamma_-}\,\hat \sigma_-$, where $\gamma_- \in \mathbb R^+$ is the relaxation rate, and $\sigma_\pm = (\sigma_{\rm x} \pm i\sigma_{\rm y})/2$.
The dephasing term is  normal in the Hilbert--Schmidt inner product, whereas the amplitude-damping channel is itself nonnormal. Consequently, the combined generator satisfies  $\eta(\mathcal L_{\rm z-}) \neq 0$.
A direct calculation in the Pauli operator basis shows that (\arf{app:dephasing_relaxation})
\begin{equation} \label{rates:dephase:relax}
	\delta(\mathcal L_{\rm z-})  = 2\gamma_{\rm z} + \frac{\gamma_-}{2}, 
	\qquad
	\eta(\mathcal L_{\rm z-})  = \sqrt{2}\, \gamma_-^2.
\end{equation}
The corresponding ratio $\kappa(\mathcal L_{\rm z-})$,is small when $\gamma_{\rm z}\gg\gamma_-$, corresponding to a weakly nonnormal regime. By contrast, for $\gamma_{\rm z}\lesssim\gamma_-$ one finds $\kappa(\mathcal L_{\rm z-})=O(1)$, placing the dynamics in the crossover regime\footnote{ 
The limiting cases recover familiar behavior. For $\gamma_-=0$ the dynamics reduces to pure dephasing and is normal, with $\eta(\mathcal L_{\rm z-})=0$. For $\gamma_{\rm z}=0$ one obtains pure amplitude damping, for which $\kappa(\mathcal L_{\rm z-})=\sqrt2$.}.

Figure~\ref{fig:dephasing_relaxation_norm} illustrates the propagator norm
$\left\|\exp\left[{\smallfrac{\tau}{\delta(\mathcal{L}_{\rm z-})} \,\mathcal{L}_{\rm z-}}\right]\right\|$
as a function of the dimensionless time
$\tau=t\,\delta(\mathcal L_{\rm z-})$.
Expressing the evolution in terms of $\tau$ removes the overall dissipative
timescale set by $\delta(\mathcal L_{\rm z-})$, allowing the influence of
nonnormality to be compared directly for different parameter choices.
The vertical dotted line at $\tau=1$, therefore marks one characteristic
dissipative time. It serves as a common reference point for comparing generators with different
$\delta(\mathcal L_{\rm z-})$, rather than indicating a sharp dynamical transition.

As the dephasing rate increases, the dissipative strength
$\delta(\mathcal L_{\rm z-})$ grows while the nonnormality
$\eta(\mathcal L_{\rm z-})$ remains fixed by the relaxation rate
$\gamma_-$. Consequently, the ratio
$\kappa(\mathcal L_{\rm z-})$ decreases, and the propagator norm
approaches the behavior expected for a normal generator. Thus,
$\gamma_{\rm z}=0.1$ (solid blue) exhibits the largest deviation from
purely dissipative evolution, whereas
$\gamma_{\rm z}=10$ (dash-dotted green) lies deep in the weakly nonnormal
regime and remains close to the normal limit. The intermediate case
$\gamma_{\rm z}=\gamma_-$ (dashed orange) interpolates smoothly between
these two behaviors.

The monotonic growth of the propagator norm should not be with
the noticeable transient overshoot often associated with strongly nonnormal
systems. Rather, for the weakly nonnormal and crossover regimes considered
here, nonnormality manifests itself through a gradual enhancement of the
propagator norm relative to the baseline exponential scaling set by
$\delta(\mathcal L_{\rm z-})$. In terms of the decomposition
$\|e^{t\mathcal L}\|=A(t)e^{t\delta(\mathcal L)}$, this corresponds to a
time-dependent amplification factor that modifies the evolution, according to \trf{tab:regimes}, without
producing a pronounced transient peak. Such peaks are expected only in the
strongly nonnormal regime, where $\kappa(\mathcal L)\gg1$, which is not realized in the present example.

The finite asymptotic value of the propagator norm reflects the balance between dissipation and nonnormality for the chosen generator. A larger plateau corresponds to a stronger cumulative enhancement of the propagator norm due to nonnormality, whereas increasing the dephasing rate lowers the plateau by suppressing this effect. 

\subsubsection{Strongly nonnormal regime}

When $\kappa(\mathcal L) \gg 1$, nonnormal effects dominate over dissipative decay at short and intermediate times. In this regime, the generator is far from normal, and the dynamics cannot be understood as a perturbation of purely dissipative evolution.  This regime is only accessible when the anti-Hermitian component of the generator dominates over dissipation, as required by the bound given in \erf{bound:on:eta}.

To characterize this behavior, we again consider the bound on the propagator norm, \erf{prop:norm:gen}.
However, in contrast to the weakly nonnormal regime, the higher-order contributions are now parametrically large, and the time-ordered exponential produces substantial corrections to the baseline exponential behavior.

As a result, the evolution can exhibit strong transient amplification, with $\|e^{t\mathcal L}\| \gg 1$ over a finite time window even when all eigenvalues of $\mathcal L$ have nonpositive real parts. This reflects the fact that the spectral properties of $\mathcal L$ alone no longer determine the dynamical behavior; instead, the nonorthogonality of eigenoperators and the associated pseudospectral structure become essential.

From the interaction-picture perspective, the rapid growth of nested commutators implies that $\widetilde{\mathcal L}_{\rm nd}(t)$ varies significantly over timescales $t \lesssim 1/\delta(\mathcal L)$, leading to nontrivial accumulation effects in the time-ordered exponential.  In operator-space terms, the evolution can repeatedly rotate operator components into directions that are only weakly damped, producing large temporary amplification before eventual decay. 

The strongly nonnormal regime also has significant implications for simulation complexity. As discussed above, the propagator norm can exhibit substantial transient amplification. Heuristically, one may expect
\begin{equation}
	\|e^{t\mathcal L}\|
	\;\lesssim\;
	\exp\!\left( c\,\kappa(\mathcal L) \right)
\end{equation}
at timescales $t \sim 1/\delta(\mathcal L)$, where $c = O(1)$ depends on details of the generator and the chosen norm. Such behavior is well known in the theory of nonnormal operators, where transient growth can be governed by pseudospectral properties rather than eigenvalues alone~\cite{TreEmb05}. As a result, numerical errors can be amplified by a factor that is not controlled solely by the dissipative strength.

\begin{table*}[t]
	\centering
	\begin{tabular}{c c c c c c c c l}
		\hline \hline
		\textbf{Class} & & $\bm{\delta(\mathcal L)}$ & & $\bm{\eta(\mathcal L)}$ & & \textbf{Dynamics} & & \textbf{Simulation Features}\Tstrut \\[4pt]
		\hline 
		Hamiltonian & & $0$ & & $0$ & & Unitary, norm-preserving & & $A(t)=1$, efficient scaling\Tstrut \\[4pt]
		Normal dissipative & & $>0$ & & $0$ & & Exponential decay, no directional flow & & $A(t)=1$, governed by $\delta(\mathcal L)$ \\[4pt]
		Weakly nonnormal &\hspace{10pt} & $>0$ &\hspace{10pt} & $\ll \delta^2$ &\hspace{10pt} & Near-exponential, small transient effects &\hspace{10pt} & $A(t) = 1 + O[\kappa(\mathcal L)]$ \\[4pt]
		Crossover & & $>0$ & & $\sim \delta^2$ & & competing decay/amplification & & $A(t)=O(1)$ \\[4pt]
		Strongly nonnormal & & $>0$ & & $\gg \delta^2$  & & Strong transient amplification & & $A(t) \sim \exp(O[\kappa(\mathcal L)])$ \\[4pt]
		\hline
		\hline
	\end{tabular}
	\caption{
		Summary of dynamical regimes classified by dissipative strength $\delta(\mathcal L)$ and nonnormality $\eta(\mathcal L)$. The dimensionless ratio $\kappa(\mathcal L)=\eta(\mathcal L)/[\delta(\mathcal L)]^2$ governs the transition between nonnormal regimes. The amplification factor $A(t)$, defined in \erf{eqn:norm:amp}, quantifies deviations from purely exponential behavior and satisfies $A(t)=1$ for normal generators ($\eta(\mathcal L)=0$).)
	}
	\label{tab:regimes}
\end{table*}

For a target accuracy $\epsilon^*$, this implies that the required precision per step scales as
\begin{equation}
	\epsilon \sim \epsilon^* \, e^{-O[\kappa(\mathcal L)]},
\end{equation}
reflecting the need to compensate for transient amplification. In other words, the per-step error must be exponentially smaller than the target global error due to intermediate growth in the propagator norm. Consequently, the overall simulation cost acquires an additional overhead that can be heuristically associated with $\kappa$,
\begin{equation}
	\mathrm{cost}
	= O\!\left[
	t\,\delta(\mathcal L)
	+\log\!\left({1}/{\epsilon^*}\right)
	\right] + O[\kappa(\mathcal L)],
\end{equation}
indicating that nonnormality introduces an additional overhead in the precision scaling proportional to $\kappa(\mathcal L)$. While the scaling with evolution time remains linear for optimal algorithms, the presence of strong nonnormality can significantly increase the overall computational cost, reflecting the intrinsic amplification of intermediate errors.

More broadly, this regime highlights the limitations of spectral intuition in open quantum systems. Even when all eigenvalues indicate decay, the transient dynamics can be highly nontrivial, underscoring the central role of nonnormality in determining physically relevant behavior.

Generally speaking, achieving $\kappa(\mathcal L) \gg 1$ requires a separation of scales in which the noncommutativity between $\mathcal L_{\rm d}$ and $\mathcal L_{\rm nd}$ grows parametrically faster than the dissipative strength itself. This is not typical in simple few-body Lindbladians with independent or weakly coupled dissipation channels, where one usually finds $\eta(\mathcal L) \lesssim \delta(\mathcal L)^2$ and hence $\kappa(\mathcal L) = O(1)$. Instead, strongly nonnormal behavior is expected to arise in more structured settings, such as systems with correlated dissipation, collective jump operators, or highly nonorthogonal operator bases. In this sense, while the strongly nonnormal regime is physically allowed, it is not generic in simple models, and its identification requires careful analysis of the interplay between dissipative structure and  coherent dynamics.
For example, strongly nonnormal behavior can arise in driven-dissipative systems when a coherent drive is sufficiently strong compared to the dissipative scale, effectively amplifying the noncommutativity between the Hermitian and anti-Hermitian components of the generator. A minimal realization of this mechanism is provided below.

\paragraph*{Strongly driven amplitude damping.} consider a single qubit subject to amplitude damping at rate $\gamma_-$ and a coherent resonant drive of strength $\Omega$. In this case, the Lindblad operator in \erf{eqn:GKSL} is given by $\hat L_- = \sqrt{\gamma_-}\, \hat \sigma_-$ and the Hamiltonian is denoted by $\hat H = \frac{\Omega}{2} \hat \sigma_{\rm x}$. In this case one finds $\delta(\mathcal L)\sim \gamma_-$ while the nonnormality scales as $\eta(\mathcal L)\sim \gamma_-^2 + \gamma_-\Omega$, leading to $\kappa(\mathcal L)\sim 1+\Omega/\gamma_-$. Thus, the strongly nonnormal regime $\kappa(\mathcal L)\gg 1$ is achieved in the strong-driving limit $\Omega\gg\gamma_-$. The derivation is provided in Appendix~\ref{app:driven_amplitude_damping}.

\section{conclusion} \label{sec:conclusion}
In summary, we introduced a structural framework for Markovian quantum dynamical generators based on a decomposition of the Lindbladian into Hermitian and anti-Hermitian components with respect to the Hilbert--Schmidt inner product. This decomposition naturally separates dissipative and norm-preserving contributions to the dynamics and motivates the introduction of two scalar measures: the dissipative strength $\delta(\mathcal L)$ and the nonnormality $\eta(\mathcal L)$. These quantities admit an operational interpretation in terms of propagator norms, error amplification, and the stability of quantum evolution, thereby providing a unified characterization of Markovian open-system dynamics at the level of the generator.

We established several general structural results. In particular, we showed that normal generators ($\eta(\mathcal L)=0$) exhibit an exact decoupling between dissipative and oscillatory dynamics. This leads to purely exponential behavior of the propagator norm governed by $\delta(\mathcal L)$. Furthermore, we demonstrated that nonnormality is intrinsically linked to dissipation: directional flow in operator space cannot arise in the absence of a finite dissipative component, although the converse does not hold. We further showed that dissipative strength and nonnormality are not independent, but obey structural constraints that can restrict the accessible regions of parameter space. In particular, strong nonnormality requires a separation of scales between coherent and dissipative contributions. These results highlight that nonnormality captures a finer  structural property in operator space  of the generator beyond standard measures of irreversibility.

Building on this framework, we systematically organized quantum dynamical generators into distinct regimes based on the relative scaling of $\delta(\mathcal L)$ and $\eta(\mathcal L)$. The resulting hierarchy, summarized in Table~\ref{tab:regimes}, ranges from purely Hamiltonian dynamics to strongly nonnormal open quantum systems. While Hamiltonian and normal dissipative dynamics exhibit stable, purely exponential behavior, nonnormal systems display transient amplification governed by the dimensionless ratio $\kappa(\mathcal L)=\eta(\mathcal L)/[\delta(\mathcal L)]^2$. In the weakly nonnormal regime, these effects remain perturbative, whereas in the strongly nonnormal regime they dominate the dynamics, leading to enhanced sensitivity to perturbations and increased simulation cost. The crossover regime marks the emergence of this behavior, where dissipative and nonnormal effects contribute comparably.

From the perspective of quantum simulation, this classification provides a transparent connection between generator structure and computational complexity. The dissipative strength $\delta(\mathcal L)$ sets the intrinsic dynamical timescale and baseline cost, while nonnormality introduces additional overhead through transient amplification of errors. As a result, simulation complexity is governed not only by spectral properties but also by the nonnormal geometry of the generator encoded in $\eta(\mathcal L)$.

Several directions for future work naturally emerge from this study. A key open problem is the development of sharper and more general bounds on the amplification factor $A(t)$ in terms of $\eta(\mathcal L)$, particularly beyond perturbative regimes. It would also be valuable to connect the present framework more directly with pseudospectral methods and nonnormal operator theory, potentially leading to refined stability criteria for open quantum dynamics. On the practical side, incorporating these insights into quantum simulation algorithms may enable adaptive schemes that account for transient amplification and optimize resource allocation.

A particularly natural extension concerns explicitly time-dependent
Markovian generators $\mathcal L(t)$. The decomposition into Hermitian and anti-Hermitian components is expected to remain well defined at each instant of time, allowing one to introduce instantaneous measures $\delta(t)\equiv \delta[\mathcal L(t)]$ and $\eta(t)\equiv \eta[\mathcal L(t)]$. However, in contrast to the time-independent setting considered here, the dynamics is then governed by a time-ordered propagator, and a classification based on a single parameter
$\kappa=\eta/\delta^2$ must be replaced by time-dependent or effective quantities that capture the cumulative influence of dissipation and nonnormality over an interval
of evolution. Developing such extensions and determining their relation to transient
amplification represents a particularly promising direction for future research and is currently under investigation.

Another promising direction is the extension of this framework to
time-local generators beyond the strictly Markovian setting,
where memory effects may further enrich the interplay between dissipation
and nonnormality. In such cases the dynamics can still be written in
terms of a time-local generator $\mathcal L(t)$, allowing for an
instantaneous decomposition into dissipative and nonnormal components,
whereas genuinely non-Markovian memory-kernel descriptions require a more
fundamental reformulation of the framework.  Finally, exploring connections
with specific physical platforms—such as driven-dissipative systems,
quantum control protocols, and error-corrected quantum devices—
may provide concrete settings in which the role of nonnormality can be
experimentally probed and exploited.

	\section*{Acknowledgment}
	 We thank the anonymous referees for their comments and suggestions, which helped improve the manuscript.  The author also thanks Laura Cunningham for reading the manuscript.

	\appendix
	\numberwithin{equation}{section}

\section{Anti-Hermitian Superoperators and the Hilbert--Schmidt Inner Product} \label{appn:ah:Lind}

Here we show that for an anti-Hermitian superoperator $\mathcal L_{\rm nd}$ (with respect to the Hilbert--Schmidt inner product), the inner product $\langle X, \mathcal L_{\rm nd}(X) \rangle$ is purely imaginary for any operator $X$. 
Recall that the Hilbert--Schmidt inner product is defined as
\begin{equation} \label{eqn:HS:prod"appn}
	\langle A, B \rangle := \mathrm{Tr}[A^\dagger B], \quad A,B \in \mathcal{B}(\mathcal{H}),
\end{equation}
and a superoperator $\mathcal L_{\rm nd}$ is anti-Hermitian if
\begin{equation}
	\mathcal L_{\rm nd}^\dagger = - \mathcal L_{\rm nd}, \quad \text{i.e. } \langle A, \mathcal L_{\rm nd}(B) \rangle = - \langle \mathcal L_{\rm nd}(A), B \rangle.
\end{equation}
Let $X$ be an arbitrary operator. Then
\begin{equation}
	\langle X, \mathcal L_{\rm nd}(X) \rangle 
	= - \langle \mathcal L_{\rm nd}(X), X \rangle
	= - \langle X, \mathcal L_{\rm nd}(X) \rangle^*.
\end{equation}
Adding the two sides gives
\begin{equation}
	\langle X, \mathcal L_{\rm nd}(X) \rangle + \langle X, \mathcal L_{\rm nd}(X) \rangle^* = 0,
\end{equation}
which implies $\langle X, \mathcal L_{\rm nd}(X) \rangle \in i \mathbb{R}$.
Thus, the inner product of an anti-Hermitian superoperator with any operator is purely imaginary. In particular, the real part vanishes, which explains why the anti-Hermitian component $\mathcal L_{\rm nd}$ does not contribute to the rate of change of the Hilbert--Schmidt norm, \erf{HS:norm:change}.

\section{Operator norm of Hermitian superoperators}
\label{app:hermitian_norm}

We justify why the dissipation strength $\delta(\mathcal L) = \|\mathcal L_{\rm d}\|$ corresponds to the largest eigenvalue magnitude of the Hermitian part $\mathcal L_{\rm d}$ of a Lindbladian.

Let $\mathcal A$ be Hermitian with respect to the Hilbert--Schmidt inner product, \erf{eqn:HS:prod"appn}, so that $\mathcal A^\dagger = \mathcal A$. The induced operator norm is given \erf{eqn:OptNorm:HS}.
By the spectral theorem, $\mathcal A$ admits an orthonormal eigenbasis $\{X_k\}$ with real eigenvalues $\lambda_k$. Expanding $X = \sum_k c_k X_k$, one finds
\begin{equation}
	\frac{\|\mathcal A(X)\|_{\rm HS}^2}{\|X\|_{\rm HS}^2} = \frac{\sum_k |c_k|^2 \lambda_k^2}{\sum_k |c_k|^2},
\end{equation}
which is maximized by choosing $X$ along the eigenvector corresponding to the largest $|\lambda_k|$. Hence
\begin{equation}
	\|\mathcal A\| = \max_k |\lambda_k|.
\end{equation}
Applying this to the Hermitian component of a Lindbladian immediately gives the expression \erf{dis:strength}.

\section{Bounds on propagator norms and error amplification} \label{app:propagation_bounds}
In this appendix we provide supporting derivations for the bounds used in Sec.~\ref{sec:formalism:metrics}.
We begin with the evolution of the Hilbert--Schmidt norm. Recall \erf{HS:norm:change}
\begin{equation} \label{eqn:appnC:EoM}
	\frac{d}{dt} \|\rho(t)\|_{\rm HS}^2
	= 2 \langle \rho, \mathcal L_{\rm d}(\rho) \rangle.
\end{equation}
Using Cauchy–Schwarz inequality and by definition of the Hilbert--Schmidt induced operator norm, \erf{eqn:OptNorm:HS}, one obtains
\begin{equation}
	|\langle \rho, \mathcal L_{\rm d}(\rho) \rangle|
	\le \|\rho\|_{\rm HS} \, \|\mathcal L_{\rm d}(\rho)\|_{\rm HS}
	\le \|\mathcal L_{\rm d}\| \, \|\rho\|_{\rm HS}^2.
\end{equation}
 Substituting this bound into \erf{eqn:appnC:EoM}, and using that $\langle \rho, \mathcal L_{\rm d}(\rho) \rangle$ is real because $\mathcal{L}_{\rm d}$
is Hermitian with respect to the Hilbert-Schmidt inner product, yields 
\begin{equation}
	\frac{d}{dt} \|\rho(t)\|_{\rm HS}^2 \le 2 \|\mathcal L_{\rm d}\| \, \|\rho(t)\|_{\rm HS}^2.
\end{equation}
Applying Gr\"{o}nwall’s inequality yields \erf{eq:prop_bound_main}.

Next, consider the propagator $e^{t\mathcal L}$. For normal generators satisfying $[\mathcal L,\mathcal L^\dagger]=0$, the decomposition $\mathcal L = \mathcal L_{\rm d} + \mathcal L_{\rm nd}$ obeys $[\mathcal L_{\rm d},\mathcal L_{\rm nd}]=0$, and therefore
\begin{equation}
	e^{t\mathcal L} = e^{t\mathcal L_{\rm d}} e^{t\mathcal L_{\rm nd}}.
\end{equation}
Since $\mathcal L_{\rm nd}$ is anti-Hermitian, $e^{t\mathcal L_{\rm nd}}$ is unitary with respect to the Hilbert--Schmidt inner product. Thus, the unitary invariance of the induced norm implies
\begin{equation}
	\left\|e^{t\mathcal L}\right\| = \left\|e^{t\mathcal L_{\rm d}}\right\|.
\end{equation}
Now, applying the spectral theorem to the Hermitian Lindbladian gives
\begin{equation}
	\left\|e^{t\mathcal L_{\rm d}}\right\| = e^{t \lambda_{\max}(\mathcal L_{\rm d})} \le e^{t \|\mathcal L_{\rm d}\|}.
\end{equation}

For nonnormal generators, the behavior of the propagator is more subtle than in the normal case. While the Hermitian component $\mathcal L_{\rm d}$ sets a natural exponential scale for the evolution,
$\delta(\mathcal L) = \|\mathcal L_{\rm d}\|$ defines the maximal instantaneous rate of Hilbert--Schmidt norm growth. However, due to the nonorthogonality of eigenmodes, the propagator may temporarily exceed this rate.
To account for this effect, we introduce a \emph{time-dependent amplification factor} $A(t) \ge 1$ via
\begin{equation} \label{eqn:amp:factor}
	\|e^{t \mathcal L}\| \equiv A(t)\, e^{t\delta(\mathcal L)}.
\end{equation}
Here, $e^{t\delta(\mathcal L)}$ sets the baseline exponential envelope from the dissipative component, while $A(t)$ quantifies transient amplification due to directional flow in operator space generated by $\mathcal L_{\rm nd}$ (in \arf{app:weak_nonnormal_bound} we perturbatively calculate an amplification function). 

This definition recovers the normal case naturally: if $\mathcal L$ is normal ($\eta(\mathcal L) = 0$), all eigenvectors are orthogonal, no transient amplification occurs, and $A(t) = 1$. For nonnormal generators, $A(t) > 1$ reflects temporary constructive interference between nonorthogonal eigenmodes, producing short-term growth in the Hilbert--Schmidt norm beyond what $\delta(\mathcal L)$ alone predicts.
For readers interested in a rigorous mathematical treatment of transient growth in nonnormal semigroups, see the discussion in~\crf{TreEmb05}.

While a general closed-form characterization of $A(t)$ in terms of $\eta(\mathcal L)$ is not available, its behavior can be analyzed in specific regimes. In particular, in the nonnormal case ($\eta(\mathcal L)>0$), transient amplification arises from the noncommutativity between $\mathcal L_{\rm d}$ and $\mathcal L_{\rm nd}$. A systematic analysis of this effect, leading to bounds on $A(t)$ in terms of the dimensionless ratio $\kappa(\mathcal L)=\eta(\mathcal L)/\delta(\mathcal L)^2$, is presented in Sec.~\ref{sec:classes} and Appendix~\ref{app:weak_nonnormal_bound}.

Finally, consider the propagation of numerical errors. Let $\tilde{U}(t)$ be an approximation to the exact propagator $e^{t\mathcal L}$ satisfying
\begin{equation}
	\|\tilde{U}(t) - e^{t\mathcal L}\| \le \epsilon.
\end{equation}
Then, for any initial state $\rho(0)$, submultiplicativity of the induced norm implies
\begin{equation}
	\|\Delta \rho(t) \| 
	\le \|\tilde{U}(t) - e^{t\mathcal L}\| \, \|\rho(0)\|
	\le \epsilon \|\rho(0)\|.
\end{equation}
where $\|\Delta \rho(t) \| \coloneq \|\tilde{U}(t)\rho(0) - \rho(t)\|$.
In practical simulation schemes, however, errors accumulate over intermediate time steps (e.g., in product formulas or time discretization). These errors are subsequently propagated and potentially amplified by the dynamics itself. As a result, the total error is controlled by the norm of the propagator, yielding
\begin{equation}
	\|\Delta \rho(t)\| \lesssim \epsilon \,\left\|e^{t\mathcal L}\right\|
	\le \epsilon \, A(t)\, e^{t \|\mathcal L_{\rm d}\|}.
\end{equation}
where we used \erf{eqn:amp:factor}. The first inequality indicates that the accumulated error is controlled by the propagator norm times $\epsilon$, up to constants depending on the accumulation procedure.

This shows that numerical errors are amplified not only by the dissipative scale set by $\delta(\mathcal L)$, but also by the transient amplification factor $A(t)$ arising from nonnormality. In particular, for normal generators ($\eta(\mathcal L)=0$), one has $A(t)=1$, and no additional amplification occurs. By contrast, for nonnormal generators, $A(t)>1$ can increase the effective precision requirements of simulation algorithms.

\section{Multi-qubit dephasing: structure and norm scaling} \label{app:multi_qubit_dephasing}

Consider an $ K$-qubit system subject to independent pure dephasing on each qubit, described by the Lindbladian (for brevity, we denote $\mathcal L_{\rm z}^K(\rho)$ by $\mathcal L(\rho)$ in the following):
\begin{equation}
	\mathcal L(\rho) = \sum_{k=1}^N \gamma_k \left( \hat\sigma_{\rm z}^{(k)} \rho \hat\sigma_{\rm z}^{(k)} - \rho \right),
\end{equation}
where $\hat\sigma_{\rm z}^{(k)}$ denotes the Pauli-$z$ operator acting on qubit $k$ and $\gamma_k \ge 0$ are the corresponding dephasing rates. Each local contribution
\begin{equation}
	\mathcal L_k(\rho) = \gamma_k \left( \hat\sigma_{\rm z}^{(k)} \rho \hat\sigma_{\rm z}^{(k)} - \rho \right)
\end{equation}
is Hermitian with respect to the Hilbert--Schmidt inner product, since $\hat\sigma_{\rm z}^{(k)\dagger} = \hat\sigma_{\rm z}^{(k)}$ and conjugation by $\hat\sigma_{\rm z}^{(k)}$ is unitary. It follows that $\mathcal L_k^\dagger = \mathcal L_k$, and therefore the full generator satisfies $\mathcal L^\dagger = \mathcal L$, implying $\mathcal L = \mathcal L_{\rm d}$ and $\mathcal L_{\rm nd}=0$. As a result, $\eta(\mathcal L) = [\mathcal L, \mathcal L^\dagger] = 0$,  showing that independent dephasing defines a normal Lindbladian for any system size.

To evaluate the dissipative strength, we compute the induced operator norm, \erf{eqn:OptNorm:HS}. A convenient basis for operator space is given by tensor products of Pauli operators $\{\hat I,\hat\sigma_{\rm x},\hat\sigma_{\rm y},\hat\sigma_{\rm z}\}^{\otimes  K}$, which form an orthogonal basis under the Hilbert--Schmidt inner product. Each basis element $\hat P = \bigotimes_{k=1}^{ K} \hat P_k$ with $\hat P_k \in \{\hat I,\hat\sigma_{\rm x},\hat\sigma_{\rm y},\hat\sigma_{\rm z}\}$ is an eigenoperator of $\mathcal L$. Indeed, using $\hat\sigma_{\rm z}^{(k)} \hat P \hat\sigma_{\rm z}^{(k)} = \pm \hat P$, with a minus sign whenever $\hat P_k \in \{\hat\sigma_{\rm x},\hat\sigma_{\rm y}\}$ and a plus sign when $\hat P_k \in \{\hat I,\hat\sigma_{\rm z}\}$, one finds
\begin{equation}
	\mathcal L_k(\hat P) =
	\begin{cases}
		0, & \hat P_k \in \{\hat I,\hat\sigma_{\rm z}\}, \\
		-2\gamma_k \hat P, & \hat P_k \in \{\hat\sigma_{\rm x},\hat\sigma_{\rm y}\}.
	\end{cases}
\end{equation}
Summing over all qubits yields
\begin{equation}
	\mathcal L(\hat P) = -2 \sum_{k \in S(\hat P)} \gamma_k \, \hat P,
\end{equation}
where $S(\hat P)$ denotes the set of qubits on which $\hat P_k \in \{\hat\sigma_{\rm x},\hat\sigma_{\rm y}\}$. Thus, each Pauli string is an eigenoperator with eigenvalue $-2 \sum_{k \in S(\hat P)} \gamma_k$. The operator norm is therefore given by the maximal magnitude of these eigenvalues, which is attained when $\hat P_k \in \{\hat\sigma_{\rm x},\hat\sigma_{\rm y}\}$ for all qubits, yielding
\begin{equation}
	\delta(\mathcal L) = \|\mathcal L\| = 2 \sum_{k=1}^{ K} \gamma_k.
\end{equation}
Hence, the dissipative strength scales additively with system size,
while the nonnormality remains identically zero. 

 The additive scaling found here follows from the locality and diagonal action of the dephasing generator in the Pauli basis. Similar behavior is expected for other collections of commuting local dissipators, although a general classification of norm scaling for arbitrary normal Lindbladians is beyond the scope of this work. 

\section{Dissipative strength for Pauli channels} \label{appn:1qb:1pauli}
We consider a single-qubit Pauli channel of the form
\begin{equation}
	\mathcal L(\rho)
	= \sum_{\alpha \in \{{\rm x,y,z}\}} \gamma_\alpha \left(
	\hat \sigma_{\alpha} \rho \hat \sigma_{\alpha} - \rho
	\right),
\end{equation}
with rates $\gamma_\alpha \ge 0$.

The Pauli operators $\{\hat I, \hat \sigma_{\rm x}, \hat \sigma_{\rm y}, \hat \sigma_{\rm z}\}$ form an orthogonal basis with respect to the Hilbert--Schmidt inner product, and the generator acts diagonally in this basis. In particular,
\begin{equation}
	\mathcal L(\hat \sigma_\beta)
	= \lambda_\beta \hat \sigma_\beta.
\end{equation}

Using the identity
\begin{equation}
	\hat \sigma_\alpha \hat \sigma_\beta \hat \sigma_\alpha =
	\begin{cases}
		\hat \sigma_\beta, & \alpha = \beta, \\
		-\hat \sigma_\beta, & \alpha \neq \beta,
	\end{cases}
\end{equation}
one finds the eigenvalues
\begin{align}
	\lambda_{\rm x} &= -2(\gamma_{\rm y} + \gamma_{\rm z}), \\
	\lambda_{\rm y} &= -2(\gamma_{\rm x} + \gamma_{\rm z}), \\
	\lambda_{\rm z} &= -2(\gamma_{\rm x} + \gamma_{\rm y}),
\end{align}
while $\lambda_0 = 0$ for the identity operator.

Since the generator is normal, the dissipative strength is given by
\begin{equation}
	\delta(\mathcal L)
	= \max_\alpha |\mathrm{Re}(\lambda_\alpha)|,
\end{equation}
which yields
\begin{equation}
	\delta(\mathcal L)
	= 2 \max \left\{
	\gamma_{\rm y} + \gamma_{\rm z},\;
	\gamma_{\rm x} + \gamma_{\rm z},\;
	\gamma_{\rm x} + \gamma_{\rm y}
	\right\}.
\end{equation}

\section{Scaling regimes of nonnormal dynamics}
\label{app:nonnormal_regimes}

In this appendix, we briefly justify the  scaling  regimes introduced in Sec.~\ref{sec:nonnormal} for generators with $\delta(\mathcal L)>0$ and $\eta(\mathcal L)>0$.

The distinction between weakly and strongly nonnormal dynamics can be understood by examining the interplay between the Hermitian and anti-Hermitian components of the generator, $\mathcal L = \mathcal L_{\rm d} + \mathcal L_{\rm nd}$. The dissipative strength $\delta(\mathcal L)$ sets the characteristic decay scale, while the nonnormality $\eta(\mathcal L)$ quantifies the extent to which these components fail to commute.
At short times, the propagator admits the expansion
\begin{equation}
	e^{t\mathcal L} = \mathcal I + t\mathcal L + \tfrac{t^2}{2}\mathcal L^2 + O(t^3).
\end{equation}
Expanding the quadratic term using $\mathcal L = \mathcal L_{\rm d} + \mathcal L_{\rm nd}$ yields
\begin{equation}
	\mathcal L^2
	=
	\mathcal L_{\rm d}^2
	+ \mathcal L_{\rm nd}^2
	+ \{\mathcal L_{\rm d},\mathcal L_{\rm nd}\}
	+ [\mathcal L_{\rm nd},\mathcal L_{\rm d}].
\end{equation}
The first three terms contribute to the baseline dissipative and norm-preserving evolution. In contrast, the commutator encodes the leading-order nonnormal contribution. Its magnitude is controlled by $\eta(\mathcal L)$ as a norm measure of noncommutativity.

To assess the relative importance of these contributions, we consider timescales $t \sim 1/\delta(\mathcal L)$ determined by dissipative decay. At such times, the leading dissipative term contributes at order unity, while the commutator-induced correction scales as
\begin{equation}
	t^2 \,\|[\mathcal L_{\rm d},\mathcal L_{\rm nd}]\|
	\;\sim\;
	\frac{\eta(\mathcal L)}{\delta(\mathcal L)^2}.
\end{equation}
This identifies the dimensionless ratio $\kappa(\mathcal L)$
as the parameter governing the relative importance of nonnormal effects.

When $\kappa(\mathcal L) \ll 1$, the commutator contribution remains perturbative over the dissipative timescale, and the dynamics is well approximated by that of a normal generator. In contrast, when $\kappa(\mathcal L) \gtrsim 1$, nonnormal effects become comparable to or dominate the leading dissipative behavior, enabling transient amplification and enhanced sensitivity to perturbations.

This provides the basis for the regime classification used in Sec.~\ref{sec:classes}. A more detailed quantitative characterization can be obtained using bounds on the propagator norm and pseudospectral analysis, which we do not pursue here.

\section{Perturbative bound for nonnormal generators}
\label{app:weak_nonnormal_bound}

In this appendix we justify the scaling behavior of the propagator norm for generators of the form given  in \erf{L:d:nd}.
 We assume that the commutator $[\mathcal L_{\rm d},\mathcal L_{\rm nd}]$ is nonvanishing, so that the generator is nonnormal. Let us define the propagator $\hat U(t) = e^{t\mathcal L}$ and introduce the interaction-picture operator
\begin{equation}
	\hat V(t) \coloneq e^{-t\mathcal L_{\rm d}} e^{t\mathcal L},
\end{equation}
such that $\hat U(t) = e^{t\mathcal L_{\rm d}} \hat V(t)$. A direct differentiation shows that $\hat V(t)$ satisfies the differential equation
\begin{equation}
	\frac{d}{dt}\hat V(t)
	=
	\widetilde{\mathcal L}_{\rm nd}(t)\,\hat V(t),
\end{equation}
where $	\widetilde{\mathcal L}_{\rm nd}(t) \coloneq e^{-t\mathcal L_{\rm d}} \mathcal L_{\rm nd} e^{t\mathcal L_{\rm d}}$. The solution is given by the time-ordered exponential
\begin{equation}
	\hat V(t)
	=
	\mathfrak T \left[\exp\!\left( \int_0^t ds \, \widetilde{\mathcal L}_{\rm nd}(s) \right)\right],
\end{equation}
and therefore
\begin{equation}
	e^{t\mathcal L}
	=
	e^{t\mathcal L_{\rm d}}
	\,
	\mathfrak T \left[\exp\!\left( \int_0^t ds \, \widetilde{\mathcal L}_{\rm nd}(s) \right)\right].
\end{equation}
The superoperator $\widetilde{\mathcal L}_{\rm nd}(s)$ can be expressed using the identity
\begin{equation}
	e^{-s\hat P} \hat Q e^{s \hat P}
	=
	\sum_{n=0}^{\infty} \frac{(- s)^{ n}}{n!} \, \mathrm{ad}_{\hat P}^n(\hat Q),
	\qquad
	\mathrm{ad}_{\hat P}(\hat Q) \coloneq [\hat P,\hat Q],
\end{equation}
 where $\mathrm{ad}_{\hat P}^n(\hat Q)$ denotes the $n$-fold iterated adjoint action, defined recursively by $\mathrm{ad}_{\hat P}^0(\hat Q) = \hat Q$, and
\begin{equation}
	\mathrm{ad}_{\hat P}^n(\hat Q) = [\hat P,\mathrm{ad}_{\hat P}^{n-1}(\hat Q)],
\end{equation}

which yields
\begin{equation}
	\widetilde{\mathcal L}_{\rm nd}(s)
	=
	\mathcal L_{\rm nd}
	+ s [\mathcal L_{\rm nd}, \mathcal L_{\rm d}]
	+ \frac{s^2}{2} [[\mathcal L_{\rm nd},\mathcal L_{\rm d}],\mathcal L_{\rm d}]
	+ \cdots.
\end{equation}
This expansion makes explicit that deviations from time-independence are governed by nested commutators involving $\mathcal L_{\rm d}$ and $\mathcal L_{\rm nd}$.

We now bound the operator norm of the propagator. Using submultiplicativity and the interaction-picture decomposition,
\begin{equation}
	\|e^{t\mathcal L}\|
	\le
	\|e^{t\mathcal L_{\rm d}}\|
	\,
	\left\|
	\mathfrak T \left[\exp\!\left( \int_0^t ds \, \widetilde{\mathcal L}_{\rm nd}(s) \right)\right]
	\right\|.
\end{equation}
Since $\mathcal L_{\rm d}$ is Hermitian,  it is unitarily diagonalizable with real eigenvalues. Therefore, by the spectral theorem, its induced operator norm is determined by its largest eigenvalue, yielding 
\begin{equation}
	\|e^{t\mathcal L_{\rm d}}\|
	=
	e^{t \|\mathcal L_{\rm d}\|}
	=
	e^{t \delta(\mathcal L)}.
\end{equation}
For the time-ordered exponential we use the general inequality
\begin{equation} \label{gen:ineq:time:order}
	\left\|
	\mathfrak T \left[\exp\!\left( \int_0^t ds\, \hat B(s) \right)\right]
	\right\|
	\le
	\exp\!\left( \int_0^t ds\, \|\hat B(s)\| \right),
\end{equation}
valid for bounded operators $\hat B(s)$.
Using the nested commutator expansion and the triangle inequality,
\begin{equation}
	\|\widetilde{\mathcal L}_{\rm nd}(s)\|
	\le
	\|\mathcal L_{\rm nd}\|
	+ s \|[\mathcal L_{\rm nd}, \mathcal L_{\rm d}]\|
	+ \frac{s^2}{2} \|[[\mathcal L_{\rm nd},\mathcal L_{\rm d}],\mathcal L_{\rm d}]\|
	+ \cdots.
\end{equation}
Now, using the nonnormality definition, \erf{nn:gen} and \erf{L:d:nd}, we get $\eta(\mathcal L) = 2\|[\mathcal L_{\rm d}, \mathcal L_{\rm nd}]\|$,
and assuming that higher nested commutators do not exceed the natural scale set by $\eta(\mathcal L)$ and $\|\mathcal L_{\rm d}\|$, we obtain the estimate
\begin{equation}
	\|\widetilde{\mathcal L}_{\rm nd}(s)\|
	\le
	\|\mathcal L_{\rm nd}\|
	+ s/2\, \eta(\mathcal L)
	+ O(s^2 \|\mathcal L_{\rm d}\| \eta(\mathcal L)).
\end{equation}
Integrating over $s \in [0,t]$ and substituting into \erf{gen:ineq:time:order} yields
\begin{equation}
	\left\|
	\mathfrak T e^{\left( \int_0^t ds \, \widetilde{\mathcal L}_{\rm nd}(s) \right)}
	\right\|
	\le
	\exp\!\big(
	t \|\mathcal L_{\rm nd}\|
	+ \tfrac{1}{4} t^2 \eta(\mathcal L)
	+ \cdots
	\big),
\end{equation}
where the ellipsis denotes higher-order contributions arising from nested commutators.
While these terms are higher order in $t$ for short times, they do not form a
parametrically suppressed hierarchy at timescales $t \sim 1/\delta(\mathcal L)$.
Instead, all such contributions scale as $\eta/\delta^2$,
so that the deviation from purely exponential behavior is controlled by this
dimensionless ratio. 
Also, since $\mathcal L_{\rm nd}$ is anti-Hermitian, it generates a norm-preserving flow in the Hilbert--Schmidt inner product, and therefore does not contribute to exponential growth of the propagator norm. Consequently, the linear term $t\|\mathcal L_{\rm nd}\|$ is not associated with any physical amplification mechanism; rather, it arises from a crude application of the triangle inequality in bounding $\|\widetilde{\mathcal L}_{\rm nd}(s)\|$. The actual growth of the interaction-picture dynamics is therefore governed by the commutator structure encoded in $[\mathcal L_{\rm d},\mathcal L_{\rm nd}]$, while the anti-Hermitian part only contributes oscillatory (non-amplifying) rotations in operator space.
Thus, the deviation from purely exponential behavior is controlled by this dimensionless ratio, which serves as the natural
perturbative parameter in the weakly nonnormal regime. Then, 
\begin{equation}
	\|e^{t\mathcal L}\|
	\le
	e^{t \delta}
	\exp\left[ O(\eta/\delta^2) \right].
\end{equation}
Therefore, when $\eta \ll \delta^2$, the effect of nonnormality remains perturbative in the dimensionless ratio, and the propagator norm follows the exponential scaling governed by $\delta(\mathcal L)$ up to controlled multiplicative corrections. When $\eta \gtrsim \delta^2$, however, the factor $\exp\!\left( O[\kappa(\mathcal L)] \right)$ modulates the exponential profile, reflecting the growing influence of nonnormal effects. 
This factor provides an estimate for the amplification amplitude $A(t)$ introduced in~\arf{app:propagation_bounds}.

\section{Competing dephasing and relaxation: derivation of $\delta(\mathcal L)$ and $\eta(\mathcal L)$}
\label{app:dephasing_relaxation}

In this appendix we derive the scaling of the dissipative strength $\delta(\mathcal L)$ and the nonnormality $\eta(\mathcal L)$ for a single qubit subject to both dephasing and relaxation, \erf{rates:dephase:relax}. Here for brevity we drop the subindex $\mathcal L_{\rm z-}$ in the Lindbladian.
Let us consider a Lindbladian of the form $\mathcal L = \mathcal L_{\rm z} + \mathcal L_{-}$,
where $\mathcal L_{\rm z}(\rho)= \gamma_{\rm z} \left( \hat\sigma_{\rm z} \rho \hat\sigma_{\rm z} - \rho \right)$ and $\mathcal L_{-}(\rho) = \gamma_- \left( \hat\sigma_- \rho \hat\sigma_+ - \tfrac{1}{2}\{\hat\sigma_+\hat\sigma_-, \rho\} \right)$.
\begin{equation}
	\mathcal L(\hat\sigma_{\rm x}) = -(2\gamma_{\rm z} + \tfrac{\gamma_-}{2}) \hat\sigma_{\rm x},
	\quad
	\mathcal L(\hat\sigma_{\rm y}) = -(2\gamma_{\rm z} + \tfrac{\gamma_-}{2}) \hat\sigma_{\rm y},
\end{equation}
while $\hat\sigma_{\rm z}$ mixes with the identity due to relaxation.
The dephasing part $\mathcal L_{\rm z}$ is Hermitian, while $\mathcal L_{-}$ contributes both Hermitian and anti-Hermitian components. It is easy to show that the Hermitian part of $\mathcal L_{-}$ acts diagonally on $\hat\sigma_{\rm x},\hat\sigma_{\rm y}$ with rate $\gamma_-/2$, and contributes an $O(\gamma_-)$ term on $\hat\sigma_{\rm z}$.
Therefore, the largest decay rate in magnitude is
\begin{equation}
	\delta(\mathcal L) = \max \left\{ 2\gamma_{\rm z} + \tfrac{\gamma_-}{2}, \; \gamma_- \right\}.
\end{equation}
  A direct computation in the Pauli basis $\{\hat I, \hat \sigma_{\rm x},\hat \sigma_{\rm y},\hat \sigma_{\rm z}\}$ shows that the full Lindbladian can be written as
\begin{equation}
	\mathcal L =
	\begin{pmatrix}
		0 & 0 & 0 & 0 \\
		0 & -(2\gamma_{\rm z} + \gamma_-/2) & 0 & 0 \\
		0 & 0 & -(2\gamma_{\rm z} + \gamma_-/2) & 0 \\
		-\gamma_- & 0 & 0 & -\gamma_-
	\end{pmatrix}.
\end{equation}
Using this representation, one obtains
\begin{equation}
	[\mathcal L,\mathcal L^\dagger]
	=
	\gamma_-^2
	\begin{pmatrix}
		-1 & 0 & 0 & -1 \\
		0 & 0 & 0 & 0 \\
		0 & 0 & 0 & 0 \\
		-1 & 0 & 0 & 1
	\end{pmatrix},
\end{equation}
whose nonzero eigenvalues are $\pm \sqrt{2}\,\gamma_-^2$. Therefore, the nonnormality measure is
\begin{equation}
	\eta(\mathcal L)
	=
	\|[\mathcal L,\mathcal L^\dagger]\|
	=
	\sqrt{2}\,\gamma_-^2.
\end{equation}

Importantly, $\eta(\mathcal L)$ is independent of the dephasing rate $\gamma_{\rm z}$; the latter contributes only to the dissipative strength.

\section{Driven amplitude damping: emergence of strong nonnormality}
\label{app:driven_amplitude_damping}

We coensider a single qubit subject to amplitude damping at rate $\gamma_-$ and a coherent resonant drive of strength $\Omega$. The Lindbladian in \erf{eqn:GKSL} is constructed by setting $\hat L_- = \sqrt{\gamma_-}\,\hat \sigma_-$, and $\hat H = \frac{\Omega}{2} \hat \sigma_{\rm x}$. In the Pauli basis, the dissipative and anti-Hermitian parts are expressed as 
\begin{equation}
	\mathcal L_{\rm d} \equiv
	\begin{pmatrix}
		0 & 0 & 0 & -\gamma_-/2 \\
		0 & -\gamma_-/2 & 0 & 0 \\
		0 & 0 & -\gamma_-/2 & 0 \\
		-\gamma_-/2 & 0 & 0 & -\gamma_-
	\end{pmatrix},
\end{equation}
and 
\begin{equation}
	\mathcal L_{\rm nd} \equiv
	\begin{pmatrix}
		0 & 0 & 0 & \gamma_-/2 \\
		0 & 0 & 0 & 0 \\
		0 & 0 & 0 & -\Omega \\
		-\gamma_-/2 & 0 & \Omega & 0
	\end{pmatrix},
\end{equation}
respectively. The dissipative strength is determined by the largest eigenvalue magnitude of
$\mathcal L_{\rm d}$. The spectrum of $\mathcal L_{\rm d}$ is $\frac{-\gamma_-}{2}\left\{1,1,\left(1\pm\sqrt{2}\right) \right\}$, and therefore
\begin{equation}
	\delta(\mathcal L) =
	\frac{1+\sqrt{2}}{2}\,\gamma_- \sim \gamma_-.
\end{equation}
The nonnormality measure is defined as $\eta(\mathcal L) 
= 2\|[\mathcal L_{\rm d},\mathcal L_{\rm nd}]\|$. Using the above matrices 
\begin{equation}
	[\mathcal L_{\rm d},\mathcal L_{\rm nd}]
	=
	\begin{pmatrix}
		{\gamma_-^2}/{2} & 0 & -{\Omega\gamma_-}/{2} & {\gamma_-^2}/{2}\\
		0&0&0&0\\
		-{\Omega\gamma_-}/{2}&0&0&-{\Omega\gamma_-}/{2}\\
		{\gamma_-^2}/{2}&0&{\Omega\gamma_-}/{2}&{\gamma_-^2}/{2}
	\end{pmatrix}. 
\end{equation}

The $\gamma_-^2$ terms reflect the intrinsic nonnormality of amplitude damping, which persists even in the absence of coherent driving. The
off-diagonal terms proportional to $\gamma_-\Omega$ originate from the noncommutativity between coherent rotations generated by $\mathcal L_{\rm nd}$ and dissipative relaxation generated by $\mathcal L_{\rm d}$, producing a shear in operator space. Consequently, in the strong-driving limit $\Omega\gg\gamma_-$, the operator norm of the commutator is dominated by the $\gamma_-\Omega$ terms, yielding $\|[\mathcal L_{\rm d},\mathcal L_{\rm nd}]\| \sim \gamma_- \Omega$, and hence
\begin{equation}
	\eta(\mathcal L) \sim \gamma_-^2 + \gamma_- \Omega.
\end{equation}
Therefore, the strong nonnormal regime $\kappa(\mathcal L) = \eta(\mathcal L)/[\delta(\mathcal L)]^2\gg 1$ is achieved when $\Omega \gg \gamma_-$, i.e., in the strong-driving limit.

\bibliography{references}

\providecommand{\noopsort}[1]{}\providecommand{\singleletter}[1]{#1}%
\begin{thebibliography}{41}%
\makeatletter
\providecommand \@ifxundefined [1]{%
 \@ifx{#1\undefined}
}%
\providecommand \@ifnum [1]{%
 \ifnum #1\expandafter \@firstoftwo
 \else \expandafter \@secondoftwo
 \fi
}%
\providecommand \@ifx [1]{%
 \ifx #1\expandafter \@firstoftwo
 \else \expandafter \@secondoftwo
 \fi
}%
\providecommand \natexlab [1]{#1}%
\providecommand \enquote  [1]{``#1''}%
\providecommand \bibnamefont  [1]{#1}%
\providecommand \bibfnamefont [1]{#1}%
\providecommand \citenamefont [1]{#1}%
\providecommand \href@noop [0]{\@secondoftwo}%
\providecommand \href [0]{\begingroup \@sanitize@url \@href}%
\providecommand \@href[1]{\@@startlink{#1}\@@href}%
\providecommand \@@href[1]{\endgroup#1\@@endlink}%
\providecommand \@sanitize@url [0]{\catcode `\\12\catcode `\$12\catcode
  `\&12\catcode `\#12\catcode `\^12\catcode `\_12\catcode `\%12\relax}%
\providecommand \@@startlink[1]{}%
\providecommand \@@endlink[0]{}%
\providecommand \url  [0]{\begingroup\@sanitize@url \@url }%
\providecommand \@url [1]{\endgroup\@href {#1}{\urlprefix }}%
\providecommand \urlprefix  [0]{URL }%
\providecommand \Eprint [0]{\href }%
\providecommand \doibase [0]{https://doi.org/}%
\providecommand \selectlanguage [0]{\@gobble}%
\providecommand \bibinfo  [0]{\@secondoftwo}%
\providecommand \bibfield  [0]{\@secondoftwo}%
\providecommand \translation [1]{[#1]}%
\providecommand \BibitemOpen [0]{}%
\providecommand \bibitemStop [0]{}%
\providecommand \bibitemNoStop [0]{.\EOS\space}%
\providecommand \EOS [0]{\spacefactor3000\relax}%
\providecommand \BibitemShut  [1]{\csname bibitem#1\endcsname}%
\let\auto@bib@innerbib\@empty
\bibitem [{\citenamefont {Breuer}\ and\ \citenamefont
  {Petruccione}(2007)}]{BrePet02}%
  \BibitemOpen
  \bibfield  {author} {\bibinfo {author} {\bibfnamefont {H.~P.}\ \bibnamefont
  {Breuer}}\ and\ \bibinfo {author} {\bibfnamefont {F.}~\bibnamefont
  {Petruccione}},\ }\href
  {https://doi.org/10.1093/acprof:oso/9780199213900.001.0001} {\emph {\bibinfo
  {title} {The Theory of Open Quantum Systems}}}\ (\bibinfo  {publisher}
  {Oxford University Press},\ \bibinfo {year} {2007})\BibitemShut {NoStop}%
\bibitem [{\citenamefont {Rivas}\ and\ \citenamefont
  {Huelga}(2012)}]{RivHue12}%
  \BibitemOpen
  \bibfield  {author} {\bibinfo {author} {\bibfnamefont {A.}~\bibnamefont
  {Rivas}}\ and\ \bibinfo {author} {\bibfnamefont {S.~F.}\ \bibnamefont
  {Huelga}},\ }\href {https://doi.org/10.1007/978-3-642-23354-8} {\emph
  {\bibinfo {title} {Open Quantum Systems: An Introduction}}}\ (\bibinfo
  {publisher} {Springer},\ \bibinfo {year} {2012})\BibitemShut {NoStop}%
\bibitem [{\citenamefont {Gardiner}\ and\ \citenamefont
  {Zoller}(2004)}]{GarZol04}%
  \BibitemOpen
  \bibfield  {author} {\bibinfo {author} {\bibfnamefont {C.~W.}\ \bibnamefont
  {Gardiner}}\ and\ \bibinfo {author} {\bibfnamefont {P.}~\bibnamefont
  {Zoller}},\ }\href {https://link.springer.com/book/9783540223016} {\emph
  {\bibinfo {title} {Quantum Noise}}}\ (\bibinfo  {publisher} {Springer},\
  \bibinfo {year} {2004})\BibitemShut {NoStop}%
\bibitem [{\citenamefont {Weimer}(2015)}]{Weimer15}%
  \BibitemOpen
  \bibfield  {author} {\bibinfo {author} {\bibfnamefont {H.}~\bibnamefont
  {Weimer}},\ }\bibfield  {title} {\bibinfo {title} {Variational principle for
  steady states of dissipative quantum many-body systems},\ }\href
  {https://doi.org/10.1103/PhysRevLett.114.040402} {\bibfield  {journal}
  {\bibinfo  {journal} {Phys. Rev. Lett.}\ }\textbf {\bibinfo {volume} {114}},\
  \bibinfo {pages} {040402} (\bibinfo {year} {2015})}\BibitemShut {NoStop}%
\bibitem [{\citenamefont {Schlimgen}\ \emph {et~al.}(2021)\citenamefont
  {Schlimgen}, \citenamefont {Head-Marsden}, \citenamefont {Sager},
  \citenamefont {Narang},\ and\ \citenamefont {Mazziotti}}]{SchMaz21}%
  \BibitemOpen
  \bibfield  {author} {\bibinfo {author} {\bibfnamefont {A.~W.}\ \bibnamefont
  {Schlimgen}}, \bibinfo {author} {\bibfnamefont {K.}~\bibnamefont
  {Head-Marsden}}, \bibinfo {author} {\bibfnamefont {L.~M.}\ \bibnamefont
  {Sager}}, \bibinfo {author} {\bibfnamefont {P.}~\bibnamefont {Narang}},\ and\
  \bibinfo {author} {\bibfnamefont {D.~A.}\ \bibnamefont {Mazziotti}},\
  }\bibfield  {title} {\bibinfo {title} {Quantum simulation of open quantum
  systems using a unitary decomposition of operators},\ }\href
  {https://doi.org/10.1103/PhysRevLett.127.270503} {\bibfield  {journal}
  {\bibinfo  {journal} {Phys. Rev. Lett.}\ }\textbf {\bibinfo {volume} {127}},\
  \bibinfo {pages} {270503} (\bibinfo {year} {2021})}\BibitemShut {NoStop}%
\bibitem [{\citenamefont {Kamakari}\ \emph {et~al.}(2022)\citenamefont
  {Kamakari}, \citenamefont {Sun}, \citenamefont {Motta},\ and\ \citenamefont
  {Minnich}}]{KamMin22}%
  \BibitemOpen
  \bibfield  {author} {\bibinfo {author} {\bibfnamefont {H.}~\bibnamefont
  {Kamakari}}, \bibinfo {author} {\bibfnamefont {S.-N.}\ \bibnamefont {Sun}},
  \bibinfo {author} {\bibfnamefont {M.}~\bibnamefont {Motta}},\ and\ \bibinfo
  {author} {\bibfnamefont {A.~J.}\ \bibnamefont {Minnich}},\ }\bibfield
  {title} {\bibinfo {title} {Digital quantum simulation of open quantum systems
  using quantum imaginary--time evolution},\ }\href
  {https://doi.org/10.1103/PRXQuantum.3.010320} {\bibfield  {journal} {\bibinfo
   {journal} {PRX Quantum}\ }\textbf {\bibinfo {volume} {3}},\ \bibinfo {pages}
  {010320} (\bibinfo {year} {2022})}\BibitemShut {NoStop}%
\bibitem [{\citenamefont {Suri}\ \emph {et~al.}(2023)\citenamefont {Suri},
  \citenamefont {Barreto}, \citenamefont {Hadfield}, \citenamefont {Wiebe},
  \citenamefont {Wudarski},\ and\ \citenamefont {Marshall}}]{SurMar23}%
  \BibitemOpen
  \bibfield  {author} {\bibinfo {author} {\bibfnamefont {N.}~\bibnamefont
  {Suri}}, \bibinfo {author} {\bibfnamefont {J.}~\bibnamefont {Barreto}},
  \bibinfo {author} {\bibfnamefont {S.}~\bibnamefont {Hadfield}}, \bibinfo
  {author} {\bibfnamefont {N.}~\bibnamefont {Wiebe}}, \bibinfo {author}
  {\bibfnamefont {F.}~\bibnamefont {Wudarski}},\ and\ \bibinfo {author}
  {\bibfnamefont {J.}~\bibnamefont {Marshall}},\ }\bibfield  {title} {\bibinfo
  {title} {Two-unitary decomposition algorithm and open quantum system
  simulation},\ }\href {https://doi.org/10.22331/q-2023-05-15-1002} {\bibfield
  {journal} {\bibinfo  {journal} {{Quantum}}\ }\textbf {\bibinfo {volume}
  {7}},\ \bibinfo {pages} {1002} (\bibinfo {year} {2023})}\BibitemShut
  {NoStop}%
\bibitem [{\citenamefont {Pocrnic}\ \emph {et~al.}(2025)\citenamefont
  {Pocrnic}, \citenamefont {Segal},\ and\ \citenamefont {Wiebe}}]{PocWie25}%
  \BibitemOpen
  \bibfield  {author} {\bibinfo {author} {\bibfnamefont {M.}~\bibnamefont
  {Pocrnic}}, \bibinfo {author} {\bibfnamefont {D.}~\bibnamefont {Segal}},\
  and\ \bibinfo {author} {\bibfnamefont {N.}~\bibnamefont {Wiebe}},\ }\bibfield
   {title} {\bibinfo {title} {Quantum simulation of {L}indbladian dynamics via
  repeated interactions},\ }\href {https://arxiv.org/abs/2312.05371} {\bibfield
   {journal} {\bibinfo  {journal} {arXiv:2312.05371}\ } (\bibinfo {year}
  {2025})}\BibitemShut {NoStop}%
\bibitem [{\citenamefont {Sander}\ \emph {et~al.}(2025)\citenamefont {Sander},
  \citenamefont {Fr{\"o}hlich}, \citenamefont {Eigel}, \citenamefont {Eisert},
  \citenamefont {Gel{\ss}}, \citenamefont {Hinterm{\"u}ller}, \citenamefont
  {Milbradt}, \citenamefont {Wille},\ and\ \citenamefont {Mendl}}]{SanMen25}%
  \BibitemOpen
  \bibfield  {author} {\bibinfo {author} {\bibfnamefont {A.}~\bibnamefont
  {Sander}}, \bibinfo {author} {\bibfnamefont {M.}~\bibnamefont
  {Fr{\"o}hlich}}, \bibinfo {author} {\bibfnamefont {M.}~\bibnamefont {Eigel}},
  \bibinfo {author} {\bibfnamefont {J.}~\bibnamefont {Eisert}}, \bibinfo
  {author} {\bibfnamefont {P.}~\bibnamefont {Gel{\ss}}}, \bibinfo {author}
  {\bibfnamefont {M.}~\bibnamefont {Hinterm{\"u}ller}}, \bibinfo {author}
  {\bibfnamefont {R.~M.}\ \bibnamefont {Milbradt}}, \bibinfo {author}
  {\bibfnamefont {R.}~\bibnamefont {Wille}},\ and\ \bibinfo {author}
  {\bibfnamefont {C.~B.}\ \bibnamefont {Mendl}},\ }\bibfield  {title} {\bibinfo
  {title} {Large-scale stochastic simulation of open quantum systems},\ }\href
  {https://doi.org/10.1038/s41467-025-66846-x} {\bibfield  {journal} {\bibinfo
  {journal} {Nature Communications}\ }\textbf {\bibinfo {volume} {16}},\
  \bibinfo {pages} {11074} (\bibinfo {year} {2025})}\BibitemShut {NoStop}%
\bibitem [{\citenamefont {Liu}\ \emph {et~al.}(2025)\citenamefont {Liu},
  \citenamefont {Lin}, \citenamefont {Chen}, \citenamefont {Xue}, \citenamefont
  {Sun}, \citenamefont {Li}, \citenamefont {Zhuang}, \citenamefont {Wang},
  \citenamefont {Wu}, \citenamefont {Gong},\ and\ \citenamefont
  {Guo}}]{LiuGuo25}%
  \BibitemOpen
  \bibfield  {author} {\bibinfo {author} {\bibfnamefont {H.~Y.}\ \bibnamefont
  {Liu}}, \bibinfo {author} {\bibfnamefont {X.}~\bibnamefont {Lin}}, \bibinfo
  {author} {\bibfnamefont {Z.~Y.}\ \bibnamefont {Chen}}, \bibinfo {author}
  {\bibfnamefont {C.}~\bibnamefont {Xue}}, \bibinfo {author} {\bibfnamefont
  {T.~P.}\ \bibnamefont {Sun}}, \bibinfo {author} {\bibfnamefont {Q.-S.}\
  \bibnamefont {Li}}, \bibinfo {author} {\bibfnamefont {X.~N.}\ \bibnamefont
  {Zhuang}}, \bibinfo {author} {\bibfnamefont {Y.~J.}\ \bibnamefont {Wang}},
  \bibinfo {author} {\bibfnamefont {Y.~C.}\ \bibnamefont {Wu}}, \bibinfo
  {author} {\bibfnamefont {M.}~\bibnamefont {Gong}},\ and\ \bibinfo {author}
  {\bibfnamefont {G.~P.}\ \bibnamefont {Guo}},\ }\bibfield  {title} {\bibinfo
  {title} {Simulation of open quantum systems on universal quantum computers},\
  }\href {https://doi.org/10.22331/q-2025-06-05-1765} {\bibfield  {journal}
  {\bibinfo  {journal} {{Quantum}}\ }\textbf {\bibinfo {volume} {9}},\ \bibinfo
  {pages} {1765} (\bibinfo {year} {2025})}\BibitemShut {NoStop}%
\bibitem [{\citenamefont {Lidar}\ and\ \citenamefont {Brun}(2013)}]{LidBru13}%
  \BibitemOpen
  \bibfield  {author} {\bibinfo {author} {\bibfnamefont {D.~A.}\ \bibnamefont
  {Lidar}}\ and\ \bibinfo {author} {\bibfnamefont {T.~A.}\ \bibnamefont
  {Brun}},\ }\href
  {https://www.cambridge.org/core/product/B51E8333050A0F9A67363254DC1EA15A}
  {\emph {\bibinfo {title} {Quantum Error Correction}}}\ (\bibinfo  {publisher}
  {Cambridge},\ \bibinfo {year} {2013})\BibitemShut {NoStop}%
\bibitem [{\citenamefont {Terhal}(2015)}]{Terhal15}%
  \BibitemOpen
  \bibfield  {author} {\bibinfo {author} {\bibfnamefont {B.~M.}\ \bibnamefont
  {Terhal}},\ }\bibfield  {title} {\bibinfo {title} {Quantum error correction
  for quantum memories},\ }\href {https://doi.org/10.1103/RevModPhys.87.307}
  {\bibfield  {journal} {\bibinfo  {journal} {Rev. Mod. Phys.}\ }\textbf
  {\bibinfo {volume} {87}},\ \bibinfo {pages} {307} (\bibinfo {year}
  {2015})}\BibitemShut {NoStop}%
\bibitem [{\citenamefont {Len}\ and\ \citenamefont {Ng}(2018)}]{LooKho18}%
  \BibitemOpen
  \bibfield  {author} {\bibinfo {author} {\bibfnamefont {Y.~L.}\ \bibnamefont
  {Len}}\ and\ \bibinfo {author} {\bibfnamefont {H.~K.}\ \bibnamefont {Ng}},\
  }\bibfield  {title} {\bibinfo {title} {Open-system quantum error
  correction},\ }\href {https://doi.org/10.1103/PhysRevA.98.022307} {\bibfield
  {journal} {\bibinfo  {journal} {Phys. Rev. A}\ }\textbf {\bibinfo {volume}
  {98}},\ \bibinfo {pages} {022307} (\bibinfo {year} {2018})}\BibitemShut
  {NoStop}%
\bibitem [{\citenamefont {Lieu}\ \emph {et~al.}(2020)\citenamefont {Lieu},
  \citenamefont {Belyansky}, \citenamefont {Young}, \citenamefont {Lundgren},
  \citenamefont {Albert},\ and\ \citenamefont {Gorshkov}}]{LieGor20}%
  \BibitemOpen
  \bibfield  {author} {\bibinfo {author} {\bibfnamefont {S.}~\bibnamefont
  {Lieu}}, \bibinfo {author} {\bibfnamefont {R.}~\bibnamefont {Belyansky}},
  \bibinfo {author} {\bibfnamefont {J.~T.}\ \bibnamefont {Young}}, \bibinfo
  {author} {\bibfnamefont {R.}~\bibnamefont {Lundgren}}, \bibinfo {author}
  {\bibfnamefont {V.~V.}\ \bibnamefont {Albert}},\ and\ \bibinfo {author}
  {\bibfnamefont {A.~V.}\ \bibnamefont {Gorshkov}},\ }\bibfield  {title}
  {\bibinfo {title} {Symmetry breaking and error correction in open quantum
  systems},\ }\href {https://doi.org/10.1103/PhysRevLett.125.240405} {\bibfield
   {journal} {\bibinfo  {journal} {Phys. Rev. Lett.}\ }\textbf {\bibinfo
  {volume} {125}},\ \bibinfo {pages} {240405} (\bibinfo {year}
  {2020})}\BibitemShut {NoStop}%
\bibitem [{\citenamefont {Spohn}(1978)}]{Spohn78}%
  \BibitemOpen
  \bibfield  {author} {\bibinfo {author} {\bibfnamefont {H.}~\bibnamefont
  {Spohn}},\ }\bibfield  {title} {\bibinfo {title} {Entropy production for
  quantum dynamical semigroups},\ }\href {https://doi.org/10.1063/1.523789}
  {\bibfield  {journal} {\bibinfo  {journal} {Journal of Mathematical Physics}\
  }\textbf {\bibinfo {volume} {19}},\ \bibinfo {pages} {1227} (\bibinfo {year}
  {1978})}\BibitemShut {NoStop}%
\bibitem [{\citenamefont {Wolf}\ and\ \citenamefont {Cirac}(2008)}]{WolCir08}%
  \BibitemOpen
  \bibfield  {author} {\bibinfo {author} {\bibfnamefont {M.~M.}\ \bibnamefont
  {Wolf}}\ and\ \bibinfo {author} {\bibfnamefont {J.~I.}\ \bibnamefont
  {Cirac}},\ }\bibfield  {title} {\bibinfo {title} {Dividing quantum
  channels},\ }\href {https://doi.org/10.1007/s00220-008-0411-y} {\bibfield
  {journal} {\bibinfo  {journal} {Communications in Mathematical Physics}\
  }\textbf {\bibinfo {volume} {279}},\ \bibinfo {pages} {147} (\bibinfo {year}
  {2008})}\BibitemShut {NoStop}%
\bibitem [{\citenamefont {Zanardi}\ and\ \citenamefont
  {Rasetti}(1997)}]{ZanRas97}%
  \BibitemOpen
  \bibfield  {author} {\bibinfo {author} {\bibfnamefont {P.}~\bibnamefont
  {Zanardi}}\ and\ \bibinfo {author} {\bibfnamefont {M.}~\bibnamefont
  {Rasetti}},\ }\bibfield  {title} {\bibinfo {title} {Noiseless quantum
  codes},\ }\href {https://doi.org/10.1103/PhysRevLett.79.3306} {\bibfield
  {journal} {\bibinfo  {journal} {Phys. Rev. Lett.}\ }\textbf {\bibinfo
  {volume} {79}},\ \bibinfo {pages} {3306} (\bibinfo {year}
  {1997})}\BibitemShut {NoStop}%
\bibitem [{\citenamefont {Lidar}\ \emph {et~al.}(1998)\citenamefont {Lidar},
  \citenamefont {Chuang},\ and\ \citenamefont {Whaley}}]{LidWha98}%
  \BibitemOpen
  \bibfield  {author} {\bibinfo {author} {\bibfnamefont {D.~A.}\ \bibnamefont
  {Lidar}}, \bibinfo {author} {\bibfnamefont {I.~L.}\ \bibnamefont {Chuang}},\
  and\ \bibinfo {author} {\bibfnamefont {K.~B.}\ \bibnamefont {Whaley}},\
  }\bibfield  {title} {\bibinfo {title} {Decoherence-free subspaces for quantum
  computation},\ }\href {https://doi.org/10.1103/PhysRevLett.81.2594}
  {\bibfield  {journal} {\bibinfo  {journal} {Phys. Rev. Lett.}\ }\textbf
  {\bibinfo {volume} {81}},\ \bibinfo {pages} {2594} (\bibinfo {year}
  {1998})}\BibitemShut {NoStop}%
\bibitem [{\citenamefont {Sasso}\ and\ \citenamefont
  {Umanità}(2023)}]{SasUma23}%
  \BibitemOpen
  \bibfield  {author} {\bibinfo {author} {\bibfnamefont {E.}~\bibnamefont
  {Sasso}}\ and\ \bibinfo {author} {\bibfnamefont {V.}~\bibnamefont
  {Umanità}},\ }\bibfield  {title} {\bibinfo {title} {The general structure of
  the decoherence-free subalgebra for uniformly continuous quantum {M}arkov
  semigroups},\ }\href {https://doi.org/10.1063/5.0092998} {\bibfield
  {journal} {\bibinfo  {journal} {Journal of Mathematical Physics}\ }\textbf
  {\bibinfo {volume} {64}},\ \bibinfo {pages} {042703} (\bibinfo {year}
  {2023})}\BibitemShut {NoStop}%
\bibitem [{\citenamefont {Li}\ \emph {et~al.}(2023)\citenamefont {Li},
  \citenamefont {Sala},\ and\ \citenamefont {Pollmann}}]{LiPol23}%
  \BibitemOpen
  \bibfield  {author} {\bibinfo {author} {\bibfnamefont {Y.}~\bibnamefont
  {Li}}, \bibinfo {author} {\bibfnamefont {P.}~\bibnamefont {Sala}},\ and\
  \bibinfo {author} {\bibfnamefont {F.}~\bibnamefont {Pollmann}},\ }\bibfield
  {title} {\bibinfo {title} {Hilbert space fragmentation in open quantum
  systems},\ }\href {https://doi.org/10.1103/PhysRevResearch.5.043239}
  {\bibfield  {journal} {\bibinfo  {journal} {Phys. Rev. Res.}\ }\textbf
  {\bibinfo {volume} {5}},\ \bibinfo {pages} {043239} (\bibinfo {year}
  {2023})}\BibitemShut {NoStop}%
\bibitem [{\citenamefont {Yoshida}(2024)}]{Yoshida24}%
  \BibitemOpen
  \bibfield  {author} {\bibinfo {author} {\bibfnamefont {H.}~\bibnamefont
  {Yoshida}},\ }\bibfield  {title} {\bibinfo {title} {Uniqueness of steady
  states of {G}orini-{K}ossakowski-{S}udarshan-{L}indblad equations: A simple
  proof},\ }\href {https://doi.org/10.1103/PhysRevA.109.022218} {\bibfield
  {journal} {\bibinfo  {journal} {Phys. Rev. A}\ }\textbf {\bibinfo {volume}
  {109}},\ \bibinfo {pages} {022218} (\bibinfo {year} {2024})}\BibitemShut
  {NoStop}%
\bibitem [{\citenamefont {Li}\ \emph {et~al.}(2025)\citenamefont {Li},
  \citenamefont {Liu}, \citenamefont {Guo}, \citenamefont {Zhang},
  \citenamefont {Zhao}, \citenamefont {Wang}, \citenamefont {Xiang},
  \citenamefont {Song}, \citenamefont {Zhang}, \citenamefont {Xu},
  \citenamefont {Fan},\ and\ \citenamefont {Zheng}}]{LiZhe25}%
  \BibitemOpen
  \bibfield  {author} {\bibinfo {author} {\bibfnamefont {L.}~\bibnamefont
  {Li}}, \bibinfo {author} {\bibfnamefont {T.}~\bibnamefont {Liu}}, \bibinfo
  {author} {\bibfnamefont {X.~Y.}\ \bibnamefont {Guo}}, \bibinfo {author}
  {\bibfnamefont {H.}~\bibnamefont {Zhang}}, \bibinfo {author} {\bibfnamefont
  {S.}~\bibnamefont {Zhao}}, \bibinfo {author} {\bibfnamefont {Z.~A.}\
  \bibnamefont {Wang}}, \bibinfo {author} {\bibfnamefont {Z.}~\bibnamefont
  {Xiang}}, \bibinfo {author} {\bibfnamefont {X.}~\bibnamefont {Song}},
  \bibinfo {author} {\bibfnamefont {Y.~X.}\ \bibnamefont {Zhang}}, \bibinfo
  {author} {\bibfnamefont {K.}~\bibnamefont {Xu}}, \bibinfo {author}
  {\bibfnamefont {H.}~\bibnamefont {Fan}},\ and\ \bibinfo {author}
  {\bibfnamefont {D.}~\bibnamefont {Zheng}},\ }\bibfield  {title} {\bibinfo
  {title} {Observation of multiple steady states with engineered dissipation},\
  }\href {https://doi.org/10.1038/s41534-025-00958-6} {\bibfield  {journal}
  {\bibinfo  {journal} {npj Quantum Information}\ }\textbf {\bibinfo {volume}
  {11}},\ \bibinfo {pages} {2} (\bibinfo {year} {2025})}\BibitemShut {NoStop}%
\bibitem [{\citenamefont {Baumgartner}\ \emph {et~al.}(2008)\citenamefont
  {Baumgartner}, \citenamefont {Narnhofer},\ and\ \citenamefont
  {Thirring}}]{BauThi08}%
  \BibitemOpen
  \bibfield  {author} {\bibinfo {author} {\bibfnamefont {B.}~\bibnamefont
  {Baumgartner}}, \bibinfo {author} {\bibfnamefont {H.}~\bibnamefont
  {Narnhofer}},\ and\ \bibinfo {author} {\bibfnamefont {W.}~\bibnamefont
  {Thirring}},\ }\bibfield  {title} {\bibinfo {title} {Analysis of quantum
  semigroups with {GKS}–{L}indblad generators: I. simple generators},\ }\href
  {https://doi.org/10.1088/1751-8113/41/6/065201} {\bibfield  {journal}
  {\bibinfo  {journal} {Journal of Physics A: Mathematical and Theoretical}\
  }\textbf {\bibinfo {volume} {41}},\ \bibinfo {pages} {065201} (\bibinfo
  {year} {2008})}\BibitemShut {NoStop}%
\bibitem [{\citenamefont {Baumgartner}\ and\ \citenamefont
  {Narnhofer}(2012)}]{BauNar12}%
  \BibitemOpen
  \bibfield  {author} {\bibinfo {author} {\bibfnamefont {B.}~\bibnamefont
  {Baumgartner}}\ and\ \bibinfo {author} {\bibfnamefont {H.}~\bibnamefont
  {Narnhofer}},\ }\bibfield  {title} {\bibinfo {title} {The structures of state
  space concerning quantum dynamical semigroups},\ }\href
  {https://doi.org/10.1142/S0129055X12500018} {\bibfield  {journal} {\bibinfo
  {journal} {Reviews in Mathematical Physics}\ }\textbf {\bibinfo {volume}
  {24}},\ \bibinfo {pages} {1250001} (\bibinfo {year} {2012})}\BibitemShut
  {NoStop}%
\bibitem [{\citenamefont {Albert}\ \emph {et~al.}(2016)\citenamefont {Albert},
  \citenamefont {Bradlyn}, \citenamefont {Fraas},\ and\ \citenamefont
  {Jiang}}]{AlbJia16}%
  \BibitemOpen
  \bibfield  {author} {\bibinfo {author} {\bibfnamefont {V.~V.}\ \bibnamefont
  {Albert}}, \bibinfo {author} {\bibfnamefont {B.}~\bibnamefont {Bradlyn}},
  \bibinfo {author} {\bibfnamefont {M.}~\bibnamefont {Fraas}},\ and\ \bibinfo
  {author} {\bibfnamefont {L.}~\bibnamefont {Jiang}},\ }\bibfield  {title}
  {\bibinfo {title} {Geometry and response of {L}indbladians},\ }\href
  {https://doi.org/10.1103/PhysRevX.6.041031} {\bibfield  {journal} {\bibinfo
  {journal} {Phys. Rev. X}\ }\textbf {\bibinfo {volume} {6}},\ \bibinfo {pages}
  {041031} (\bibinfo {year} {2016})}\BibitemShut {NoStop}%
\bibitem [{\citenamefont {Childs}\ \emph {et~al.}(2021)\citenamefont {Childs},
  \citenamefont {Su}, \citenamefont {Tran}, \citenamefont {Wiebe},\ and\
  \citenamefont {Zhu}}]{ChiZhu21}%
  \BibitemOpen
  \bibfield  {author} {\bibinfo {author} {\bibfnamefont {A.~M.}\ \bibnamefont
  {Childs}}, \bibinfo {author} {\bibfnamefont {Y.}~\bibnamefont {Su}}, \bibinfo
  {author} {\bibfnamefont {M.~C.}\ \bibnamefont {Tran}}, \bibinfo {author}
  {\bibfnamefont {N.}~\bibnamefont {Wiebe}},\ and\ \bibinfo {author}
  {\bibfnamefont {S.}~\bibnamefont {Zhu}},\ }\bibfield  {title} {\bibinfo
  {title} {Theory of trotter error with commutator scaling},\ }\href
  {https://doi.org/10.1103/PhysRevX.11.011020} {\bibfield  {journal} {\bibinfo
  {journal} {Phys. Rev. X}\ }\textbf {\bibinfo {volume} {11}},\ \bibinfo
  {pages} {011020} (\bibinfo {year} {2021})}\BibitemShut {NoStop}%
\bibitem [{\citenamefont {Kliesch}\ \emph {et~al.}(2011)\citenamefont
  {Kliesch}, \citenamefont {Barthel}, \citenamefont {Gogolin}, \citenamefont
  {Kastoryano},\ and\ \citenamefont {Eisert}}]{KliEis11}%
  \BibitemOpen
  \bibfield  {author} {\bibinfo {author} {\bibfnamefont {M.}~\bibnamefont
  {Kliesch}}, \bibinfo {author} {\bibfnamefont {T.}~\bibnamefont {Barthel}},
  \bibinfo {author} {\bibfnamefont {C.}~\bibnamefont {Gogolin}}, \bibinfo
  {author} {\bibfnamefont {M.}~\bibnamefont {Kastoryano}},\ and\ \bibinfo
  {author} {\bibfnamefont {J.}~\bibnamefont {Eisert}},\ }\bibfield  {title}
  {\bibinfo {title} {Dissipative quantum {C}hurch-{T}uring theorem},\ }\href
  {https://doi.org/10.1103/PhysRevLett.107.120501} {\bibfield  {journal}
  {\bibinfo  {journal} {Phys. Rev. Lett.}\ }\textbf {\bibinfo {volume} {107}},\
  \bibinfo {pages} {120501} (\bibinfo {year} {2011})}\BibitemShut {NoStop}%
\bibitem [{\citenamefont {Berry}\ \emph {et~al.}(2015)\citenamefont {Berry},
  \citenamefont {Childs}, \citenamefont {Cleve}, \citenamefont {Kothari},\ and\
  \citenamefont {Somma}}]{BerSom15}%
  \BibitemOpen
  \bibfield  {author} {\bibinfo {author} {\bibfnamefont {D.~W.}\ \bibnamefont
  {Berry}}, \bibinfo {author} {\bibfnamefont {A.~M.}\ \bibnamefont {Childs}},
  \bibinfo {author} {\bibfnamefont {R.}~\bibnamefont {Cleve}}, \bibinfo
  {author} {\bibfnamefont {R.}~\bibnamefont {Kothari}},\ and\ \bibinfo {author}
  {\bibfnamefont {R.~D.}\ \bibnamefont {Somma}},\ }\bibfield  {title} {\bibinfo
  {title} {Simulating {H}amiltonian dynamics with a truncated {T}aylor
  series},\ }\href {https://doi.org/10.1103/PhysRevLett.114.090502} {\bibfield
  {journal} {\bibinfo  {journal} {Phys. Rev. Lett.}\ }\textbf {\bibinfo
  {volume} {114}},\ \bibinfo {pages} {090502} (\bibinfo {year}
  {2015})}\BibitemShut {NoStop}%
\bibitem [{\citenamefont {Trefethen}\ and\ \citenamefont
  {Embree}(2005)}]{TreEmb05}%
  \BibitemOpen
  \bibfield  {author} {\bibinfo {author} {\bibfnamefont {L.~N.}\ \bibnamefont
  {Trefethen}}\ and\ \bibinfo {author} {\bibfnamefont {M.}~\bibnamefont
  {Embree}},\ }\href@noop {} {\emph {\bibinfo {title} {Spectra and
  Pseudospectra: The Behavior of Nonnormal Matrices and Operators}}}\ (\bibinfo
   {publisher} {Princeton University Press},\ \bibinfo {address} {Princeton,
  NJ},\ \bibinfo {year} {2005})\BibitemShut {NoStop}%
\bibitem [{\citenamefont {Chalker}\ and\ \citenamefont
  {Mehlig}(1998)}]{ChaMeh98}%
  \BibitemOpen
  \bibfield  {author} {\bibinfo {author} {\bibfnamefont {J.~T.}\ \bibnamefont
  {Chalker}}\ and\ \bibinfo {author} {\bibfnamefont {B.}~\bibnamefont
  {Mehlig}},\ }\bibfield  {title} {\bibinfo {title} {Eigenvector statistics in
  non-{H}ermitian random matrix ensembles},\ }\href
  {https://doi.org/10.1103/PhysRevLett.81.3367} {\bibfield  {journal} {\bibinfo
   {journal} {Phys. Rev. Lett.}\ }\textbf {\bibinfo {volume} {81}},\ \bibinfo
  {pages} {3367} (\bibinfo {year} {1998})}\BibitemShut {NoStop}%
\bibitem [{\citenamefont {Naves}\ \emph {et~al.}(2026)\citenamefont {Naves},
  \citenamefont {Kvorning},\ and\ \citenamefont {Larson}}]{NavLar26}%
  \BibitemOpen
  \bibfield  {author} {\bibinfo {author} {\bibfnamefont {C.~B.}\ \bibnamefont
  {Naves}}, \bibinfo {author} {\bibfnamefont {T.~K.}\ \bibnamefont
  {Kvorning}},\ and\ \bibinfo {author} {\bibfnamefont {J.}~\bibnamefont
  {Larson}},\ }\bibfield  {title} {\bibinfo {title} {When level repulsion
  fails: non-normality and chaos in open quantum systems},\ }\href
  {https://arxiv.org/abs/2604.00118} {\bibfield  {journal} {\bibinfo  {journal}
  {arXiv:2604.00118}\ } (\bibinfo {year} {2026})}\BibitemShut {NoStop}%
\bibitem [{\citenamefont {Ashida}\ \emph {et~al.}(2020)\citenamefont {Ashida},
  \citenamefont {Gong},\ and\ \citenamefont {Ueda}}]{AshUed20}%
  \BibitemOpen
  \bibfield  {author} {\bibinfo {author} {\bibfnamefont {Y.}~\bibnamefont
  {Ashida}}, \bibinfo {author} {\bibfnamefont {Z.}~\bibnamefont {Gong}},\ and\
  \bibinfo {author} {\bibfnamefont {M.}~\bibnamefont {Ueda}},\ }\bibfield
  {title} {\bibinfo {title} {Non-{H}ermitian physics},\ }\href
  {https://doi.org/10.1080/00018732.2021.1876991} {\bibfield  {journal}
  {\bibinfo  {journal} {Advances in Physics}\ }\textbf {\bibinfo {volume}
  {69}},\ \bibinfo {pages} {249} (\bibinfo {year} {2020})}\BibitemShut
  {NoStop}%
\bibitem [{\citenamefont {Bello}\ \emph {et~al.}(2025)\citenamefont {Bello},
  \citenamefont {Pavan}, \citenamefont {Cataudella},\ and\ \citenamefont
  {Farina}}]{BelDon25}%
  \BibitemOpen
  \bibfield  {author} {\bibinfo {author} {\bibfnamefont {G.~D.}\ \bibnamefont
  {Bello}}, \bibinfo {author} {\bibfnamefont {F.}~\bibnamefont {Pavan}},
  \bibinfo {author} {\bibfnamefont {V.}~\bibnamefont {Cataudella}},\ and\
  \bibinfo {author} {\bibfnamefont {D.}~\bibnamefont {Farina}},\ }\bibfield
  {title} {\bibinfo {title} {Local and global master equations through the lens
  of non-{H}ermitian physics},\ }\href {https://arxiv.org/abs/2509.10425}
  {\bibfield  {journal} {\bibinfo  {journal} {arXiv:2509.10425}\ } (\bibinfo
  {year} {2025})}\BibitemShut {NoStop}%
\bibitem [{\citenamefont {Lindblad}(1976)}]{Lindblad76}%
  \BibitemOpen
  \bibfield  {author} {\bibinfo {author} {\bibfnamefont {G.}~\bibnamefont
  {Lindblad}},\ }\bibfield  {title} {\bibinfo {title} {On the generators of
  quantum dynamical semigroups},\ }\href {https://doi.org/10.1007/BF01608499}
  {\bibfield  {journal} {\bibinfo  {journal} {Communications in Mathematical
  Physics}\ }\textbf {\bibinfo {volume} {48}},\ \bibinfo {pages} {119}
  (\bibinfo {year} {1976})}\BibitemShut {NoStop}%
\bibitem [{\citenamefont {Gorini}\ \emph {et~al.}(1976)\citenamefont {Gorini},
  \citenamefont {Kossakowski},\ and\ \citenamefont {Sudarshan}}]{GKS76}%
  \BibitemOpen
  \bibfield  {author} {\bibinfo {author} {\bibfnamefont {V.}~\bibnamefont
  {Gorini}}, \bibinfo {author} {\bibfnamefont {A.}~\bibnamefont
  {Kossakowski}},\ and\ \bibinfo {author} {\bibfnamefont {E.~C.~G.}\
  \bibnamefont {Sudarshan}},\ }\bibfield  {title} {\bibinfo {title} {Completely
  positive dynamical semigroups of n‐level systems},\ }\href
  {https://doi.org/10.1063/1.522979} {\bibfield  {journal} {\bibinfo  {journal}
  {Journal of Mathematical Physics}\ }\textbf {\bibinfo {volume} {17}},\
  \bibinfo {pages} {821} (\bibinfo {year} {1976})}\BibitemShut {NoStop}%
\bibitem [{\citenamefont {Gronwall}(1919)}]{Gronwall1919}%
  \BibitemOpen
  \bibfield  {author} {\bibinfo {author} {\bibfnamefont {T.~H.}\ \bibnamefont
  {Gronwall}},\ }\bibfield  {title} {\bibinfo {title} {Note on the derivatives
  with respect to a parameter of the solutions of a system of differential
  equations},\ }\href {http://www.jstor.org/stable/1967124} {\bibfield
  {journal} {\bibinfo  {journal} {Annals of Mathematics}\ }\textbf {\bibinfo
  {volume} {20}},\ \bibinfo {pages} {292} (\bibinfo {year} {1919})}\BibitemShut
  {NoStop}%
\bibitem [{\citenamefont {Low}\ and\ \citenamefont {Chuang}(2017)}]{LowChu17}%
  \BibitemOpen
  \bibfield  {author} {\bibinfo {author} {\bibfnamefont {G.~H.}\ \bibnamefont
  {Low}}\ and\ \bibinfo {author} {\bibfnamefont {I.~L.}\ \bibnamefont
  {Chuang}},\ }\bibfield  {title} {\bibinfo {title} {Optimal {H}amiltonian
  simulation by quantum signal processing},\ }\href
  {https://doi.org/10.1103/PhysRevLett.118.010501} {\bibfield  {journal}
  {\bibinfo  {journal} {Phys. Rev. Lett.}\ }\textbf {\bibinfo {volume} {118}},\
  \bibinfo {pages} {010501} (\bibinfo {year} {2017})}\BibitemShut {NoStop}%
\bibitem [{\citenamefont {Borras}\ and\ \citenamefont
  {Marvian}(2025)}]{BorMar25}%
  \BibitemOpen
  \bibfield  {author} {\bibinfo {author} {\bibfnamefont {E.}~\bibnamefont
  {Borras}}\ and\ \bibinfo {author} {\bibfnamefont {M.}~\bibnamefont
  {Marvian}},\ }\bibfield  {title} {\bibinfo {title} {Quantum algorithms based
  on quantum trajectories},\ }\href {https://arxiv.org/abs/2509.10425}
  {\bibfield  {journal} {\bibinfo  {journal} {arXiv:2509.10425}\ } (\bibinfo
  {year} {2025})}\BibitemShut {NoStop}%
\bibitem [{\citenamefont {Berry}\ \emph {et~al.}(2007)\citenamefont {Berry},
  \citenamefont {Ahokas}, \citenamefont {Cleve},\ and\ \citenamefont
  {Sanders}}]{BerSan07}%
  \BibitemOpen
  \bibfield  {author} {\bibinfo {author} {\bibfnamefont {D.~W.}\ \bibnamefont
  {Berry}}, \bibinfo {author} {\bibfnamefont {G.}~\bibnamefont {Ahokas}},
  \bibinfo {author} {\bibfnamefont {R.}~\bibnamefont {Cleve}},\ and\ \bibinfo
  {author} {\bibfnamefont {B.~C.}\ \bibnamefont {Sanders}},\ }\bibfield
  {title} {\bibinfo {title} {Efficient quantum algorithms for simulating sparse
  {H}amiltonians},\ }\href {https://doi.org/10.1007/s00220-006-0150-x}
  {\bibfield  {journal} {\bibinfo  {journal} {Communications in Mathematical
  Physics}\ }\textbf {\bibinfo {volume} {270}},\ \bibinfo {pages} {359}
  (\bibinfo {year} {2007})}\BibitemShut {NoStop}%
\bibitem [{\citenamefont {Childs}\ and\ \citenamefont {van
  Dam}(2010)}]{ChiDam10}%
  \BibitemOpen
  \bibfield  {author} {\bibinfo {author} {\bibfnamefont {A.~M.}\ \bibnamefont
  {Childs}}\ and\ \bibinfo {author} {\bibfnamefont {W.}~\bibnamefont {van
  Dam}},\ }\bibfield  {title} {\bibinfo {title} {Quantum algorithms for
  algebraic problems},\ }\href {https://doi.org/10.1103/RevModPhys.82.1}
  {\bibfield  {journal} {\bibinfo  {journal} {Rev. Mod. Phys.}\ }\textbf
  {\bibinfo {volume} {82}},\ \bibinfo {pages} {1} (\bibinfo {year}
  {2010})}\BibitemShut {NoStop}%
\bibitem [{\citenamefont {Childs}\ \emph {et~al.}(2018)\citenamefont {Childs},
  \citenamefont {Maslov}, \citenamefont {Nam}, \citenamefont {Ross},\ and\
  \citenamefont {Su}}]{ChiSu18}%
  \BibitemOpen
  \bibfield  {author} {\bibinfo {author} {\bibfnamefont {A.~M.}\ \bibnamefont
  {Childs}}, \bibinfo {author} {\bibfnamefont {D.}~\bibnamefont {Maslov}},
  \bibinfo {author} {\bibfnamefont {Y.}~\bibnamefont {Nam}}, \bibinfo {author}
  {\bibfnamefont {N.~J.}\ \bibnamefont {Ross}},\ and\ \bibinfo {author}
  {\bibfnamefont {Y.}~\bibnamefont {Su}},\ }\bibfield  {title} {\bibinfo
  {title} {Toward the first quantum simulation with quantum speedup},\ }\href
  {https://doi.org/10.1073/pnas.1801723115} {\bibfield  {journal} {\bibinfo
  {journal} {Proceedings of the National Academy of Sciences}\ }\textbf
  {\bibinfo {volume} {115}},\ \bibinfo {pages} {9456} (\bibinfo {year}
  {2018})}\BibitemShut {NoStop}%
\end{thebibliography}%
	
\end{document}